 \setlist{nosep}
\pgfplotsset{
  label style={font=\footnotesize},
  tick label style={font=\footnotesize}
}
\newlength{\wdfig}\setlength{\wdfig}{0.75\linewidth}
\newlength{\htfig}\setlength{\htfig}{0.25\linewidth}
\DeclareMathOperator{\diag}{diag}
\DeclareMathOperator{\grad}{\ensuremath{\mathrm{grad}}}
\DeclareMathOperator{\linspan}{\ensuremath{\mathrm{span}}}
\newcommand\norm[1]{\ensuremath{\lVert#1\rVert}}
\newcommand{\ball}[2]{\ensuremath{\mathcal{B}_{#1}(#2)}}
\newcommand{\Clem}{\ensuremath{\mathcal{C}_\text{lem}}}
\newcommand{\Csph}{\ensuremath{\mathcal{C}_\text{sph}}}
\newcommand{\G}{\ensuremath{\mathcal{G}}}
\newcommand{\V}{\ensuremath{\mathcal{V}}}
\newcommand{\Vu}{\ensuremath{\mathcal{V}_\text{u}}}
\newcommand{\Vl}{\ensuremath{\mathcal{V}_\text{l}}}
\newcommand{\E}{\ensuremath{\mathcal{E}}}
\newcommand{\Hn}{\ensuremath{\mathcal{H}^n}}
\newcommand{\Iu}{\ensuremath{\mathcal{I}_\text{u}}}
\newcommand{\Il}{\ensuremath{\mathcal{I}_\text{l}}}
\newcommand{\M}{\ensuremath{\mathcal{M}}}
\newcommand{\nei}{\ensuremath{\mathcal{N}}}
\newcommand{\normal}{\mathrm{n}}
\newcommand{\proj}{\mathrm{P}}
\newcommand{\tansp}[2][]{\ensuremath{\mathsf{T}_{#2}#1}}
\newcommand{\xu}{\ensuremath{x_\text{u}}}
\newcommand{\xel}{\ensuremath{x_\text{l}}}
\newcommand{\lowinf}{\mathop{\mathrm{inf}\vphantom{\mathrm{sup}}}}
\newtheorem{theorem}{Theorem}
\newtheorem{proposition}[theorem]{Proposition}
\newtheorem{corollary}[theorem]{Corollary}
\newtheorem{assumption}[theorem]{Assumption}
\newtheorem{definition}[theorem]{Definition}
\newtheorem{remark}[theorem]{Remark}
\title{Asymptotically stable polarization of multi-agent gradient flows over manifolds}
\author{La Mi \and Jorge Gon\c{c}alves \and Johan Markdahl}
\date{Luxembourg Centre for Systems Biomedicine\\
  University of Luxembourg\\
  \{la.mi, jorge.goncalves\}@uni.lu, markdahl@kth.se}
\begin{document}
\maketitle
\textbf{Abstract}.
Multi-agent systems are known to exhibit stable emergent behaviors, including polarization,
over \(\mathbb{R}^n\) or highly symmetric nonlinear spaces.
In this article, we eschew linearity and symmetry of the underlying spaces,
and study the stability of polarized equilibria of multi-agent gradient flows evolving on general hypermanifolds.
The agents attract or repel each other according to the partition of the communication graph that is connected but otherwise arbitrary.
The manifolds are outfitted with geometric features styled ``dimples'' and ``pimples'' that characterize the absence of flatness.
The signs of inter-agent couplings together with these geometric features give rise to stable polarization
under various sufficient conditions.
We propose tangible interpretation of the system in the context of opinion dynamics,
and highlight throughout the text its versatility in modeling various aspects of the polarization phenomenon.

\section{Introduction}
\label{sec:intro}
Let there be a collection of simple agents traversing a complex terrain.
Their goal is to approach the weighted average positions of their neighbors,
which they are programmed to either attract or repel.
This is a typical distributed process \cite{tron12riemannian},
and is akin to many aspects of social opinion formation,
as numerous observers in the systems and control community have identified \cite{proskurnikov17tutorial,proskurnikov18tutorial}.
Accordingly, insights on multi-agent systems, especially on the stability and convergence properties of consensus,
have been aptly applied to a dizzying array of opinion dynamics models to study social agreement
\cite{cao10convergence,mirtabatabaei12opinion,rossi20opinion}.
Continuing this tradition, we investigate multi-agent systems that evolve over nonlinear spaces
for another type of emergent behavior, namely, polarization.
Through this angle, we seek to understand the social phenomenon of opinion polarization,
which has become increasingly conspicuous in the backdrop of perceived deepening of social discords \cite{gentzkow19measuring}.

The specific empirical observations of the multi-faceted polarization phenomenon we are concerned with are the following.
At the individual level, the personal belief system of an agent is simpler and cleaner
compared to the richness of the external environment,
composed of ideas, events, and issues in the public sphere that are often unrelated to each other.
At the macroscopic level, polarization not only arises in a straightforward fashion
when two parties holding diametric views are squarely opposed to each other.
It might also happen among groups sharing significant common grounds with ostensibly minor differences
(one iota of difference between the Homoousians and the Homoiousians \cite[Chp.~21]{gibbon89decline}),
contrary to outsider intuition.
In addition, polarization is not merely characterized by clustering of opinions,
but can also be accompanied by radicalization, that is, the widening of differences between opinion clusters.
We show how our results capture these patterns that have not been addressed to date,
thus rendering our polarization problem over nonlinear space a suitable phenomenological model for opinion dynamics.
More concretely,
we study a model of multi-agent gradient flow system confined to manifolds embedded in the Euclidean space.
Gradient descent flow, being a sufficiently simple optimizing process, is amenable to rigorous stability analysis
and thus widely adopted by many agent-based models as coordinating protocols for robot swarms \cite{sepulchre11consensus}.
The restriction to nonlinear spaces in social dynamics analysis \cite{caponigro15nonlinear}, however, is less common,
as the majority of opinion models live in linear spaces.\footnote{
  We would like to note that not all works on opinion dynamics fall under this dichotomy.
  Notably, a sheaf-theoretic perspective is offered in \cite{hansen21opinion}
  where more complex data structures (rather than traditional vector valued) over general topological spaces
  may accommodate more sophisticated mechanisms in social discourses.}
They are typically contractive, in the sense that opinions become less extreme
irrespective of consensus or social cleavage as the final state,
even when explicit mechanisms to foster social cleavage is incorporated
such as antagonistic interactions \cite{altafini12consensus}, bounded confidence \cite{hegselmann02opinion},
and stubborn agents \cite{parsegov16novel,amelkin17polar}. %
On the other hand, we are motivated by the observation that the variety of issues or situations that an individual confronts is necessarily
more complex and nuanced than the set of a few principles, or core beliefs, used to navigate those situations.
The external events or environment may thus be naturally presented in the ambient Euclidean space,
and the core beliefs of an individual are encoded in the parametrization of a lower dimensional manifold.
The analytical investigation into this conceptual interpretation is made possible
by outfitting the general manifolds with special geometric features styled ``dimples'' and ``pimples''.
The interplay between these geometric features and cooperative/antagonistic interactions among agents
then gives rise to different routes to polarization.

There are a few polarization studies on manifolds in the literature, where the \(n\)-sphere has received the most attention.
Gaitonde et al \cite{gaitonde21polarization} studied a class of Markov processes,
where both the agent positions and random external stimuli have the same dimension.
The antipodal configuration is a natural definition for polarization,
and the geometric properties of the hypersphere are exploited to prove almost sure convergence.
Hong and Strogatz found traveling wave polarization in addition to the stationary type
in a variant of the Kuramoto model over the unit circle with conformist and contrarian oscillators \cite{hong11kuramoto}.
The bifurcation points between different steady states were solved exactly
by a series of reduction techniques applied to the mean field approximation.
A higher dimensional Kuramoto model analyzed by Ha et al \cite{ha20emergence}
also features positive and negative couplings between agents.
They obtained stability conditions on initial conditions, relative strengths of the two types of couplings,
and frequency matrices that govern the self dynamics.
More elaborate state-dependent interaction rules inspired by neuroscience are considered by Crnki{\'c} and Ja{\'c}imovi{\'c}
over a 3-sphere through a quaternion formulation \cite{crnkic18swarms}.
The antipodal configuration is asymptotically stable if agents attract or repulse each other when they are respectively close or far.
For ring graphs over the 2-sphere,
Song et al \cite{song17intrinsic} obtained asymptotically stable polarization with even number of agents.
Moreover, the result is almost global if the graph is undirected.
For more general manifolds, a recent work by Aydogdu et al \cite{aydo17opinion} explored geodesic and chordal interactions
between agents on general Riemannian manifolds,
and established the existence of various equilibria and orbits but without stability analysis.

In view of these related works,
our contribution is that we provide rigorous stability analysis of polarized equilibria for the multi-agent gradient descent system
with arbitrary connected network topology over more general manifolds.
When reduced to the hypersphere case,
our results generalize and complement the existing ones obtained in \cite{ha20emergence,song17intrinsic} (or in the literature).
Moreover, we draw on our unique interpretation for embedded lower dimensional manifolds in the context of opinion dynamics,
to address aspects of the opinion polarization hitherto unaccounted for.

More broadly, many multi-agent models over manifolds have found broad application in engineering, biology, and social science.
One of the most remarkable is the Kuramoto model of coupled phase oscillators over the unit circle,
which has been used to model synchronization problems such as firefly flashing \cite{buck68mechanism,ermentrout91adaptive},
circadian rhythms \cite{antonsen08external,childs08stability}, large scale power grids \cite{dorfler13synchronization}, and more.
The Lohe model on the \(n\)-sphere as a higher dimensional generalization of the Kuramoto model
can be applied in quantum synchronization on the Bloch sphere \cite{lohe10quantum}.
Moreover, many engineering problems concerning orientation alignment
are naturally considered on special orthogonal groups \cite{sarlette10coordinated}.
Accordingly, we expect our study on polarization over more general manifolds
to be relevant in multiple problems across disciplines, eg cellular division of labor shaped by organism geometry \cite{staps20evolution},
spacecraft coordination in the presence of space-time potential well, and beyond.

\section{Setup}
\label{sec:setup}
\subsection{Geometric features of the manifold}
\label{sec:hni}
Consider a closed and orientable hypersurface embedded in the Euclidean ambient space
\[
  \mathcal{H}^n = \{y \in \mathbb{R}^{n+1} \,|\, c(y) = 0\}
\]
implicitly characterized by a smooth \(C^2\) function \(c:\mathbb{R}^{n+1} \rightarrow \mathbb{R}\). %
The hypersurface \(\mathcal{H}^n\) separates its complement \(\mathbb{R}^{n+1} - \mathcal{H}^n\) into two disjoint sets,
one where \(c\) is positive and the other where \(c\) is negative~\cite{lima88jordan}. 
Without loss of generality, we identify the former with the unbounded set outside \(\mathcal{H}^n\),
and the latter with the bounded set inside \(\mathcal{H}^n\).
We adopt the convention that the unit normal \(\normal(x) = \nabla c(x) / \norm{\nabla c(x)}\) is outward-pointing
(i.e., pointing towards the unbounded set),
where we assume that the gradient in \(\mathbb{R}^{n+1}\) satisfies \(\nabla c(x) \neq 0\) for every \(x \in \mathcal{H}^n\).
Under this assumption, it can be shown that \(\Hn\) is a hypermanifold by the implicit function theorem.

The hypermanifold \(\Hn\) is equipped with special features: dimples and pimples.
To define them, introduce a height function \(h_x \colon \mathcal{H}^n \rightarrow \mathbb{R}\) with respect to a fixed \(x\)
\[
  h_x(y) \coloneqq \langle \normal(x),y \rangle, \quad \forall y \in \mathcal{H}^n.
\]
The height function gives the altitude of a point y along the axis spanned by \(\normal(x)\).
For the example of a 2-sphere, we can take the north pole \((0,0,1)\) as the fixed point.
Then the normal at the north pole is \((0,0,1)\).
Consequently, the height function of any point on the sphere gives its \(z\)-coordinate.
For notational convenience, if the fixed \(x\) carries a subscript, e.g.\ \(x_i\), then \(h_{x_i}\) is shortened as \(h_i\).
Similarly, \(\normal(x_i)\) is often shortened to \(\normal_i\).
Now we are ready to introduce the definitions for a dimple and a pimple.
\begin{definition}
  \label{def:imple}
  If for some \(x \in \mathcal{H}^n\),
  \(y = x\) is a strict local minimizer of \(h_x(y)\) in a sufficiently small neighborhood
  \(\mathcal{I}_x = \{y \in \mathcal{H}^n \,|\, \norm{y - x} < \epsilon\}\),
  then \(\mathcal{I}_x\) is referred to as a dimple, and \(x\) the bottom of the dimple.
  Similarly, if for some \(x \in \mathcal{H}^n\),
  \(y = x\) is a strict local maximizer of \(h_x(y)\) in a sufficiently small neighborhood
  \(\mathcal{I}_x\), then \(\mathcal{I}_x\) is referred to as a pimple, and \(x\) the bottom of the pimple.
\end{definition}
Figure~\ref{fig:fruity} illustrates the concepts of dimples and pimples in \(\mathbb{R}^3\) by three fundamental fruits.
\begin{figure}
  \centering
  \includegraphics[width=.3\linewidth]{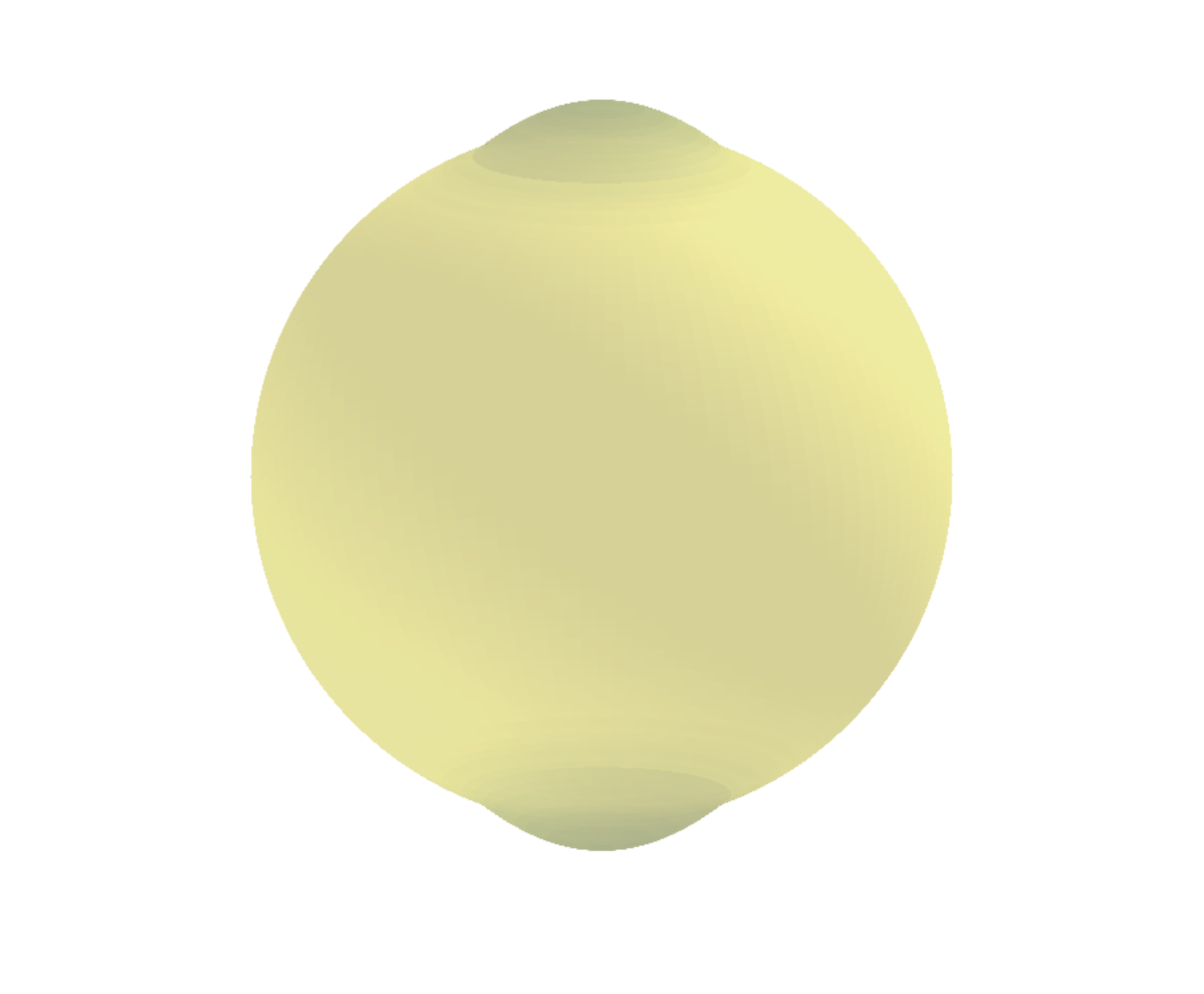}
  \includegraphics[width=.36\textwidth]{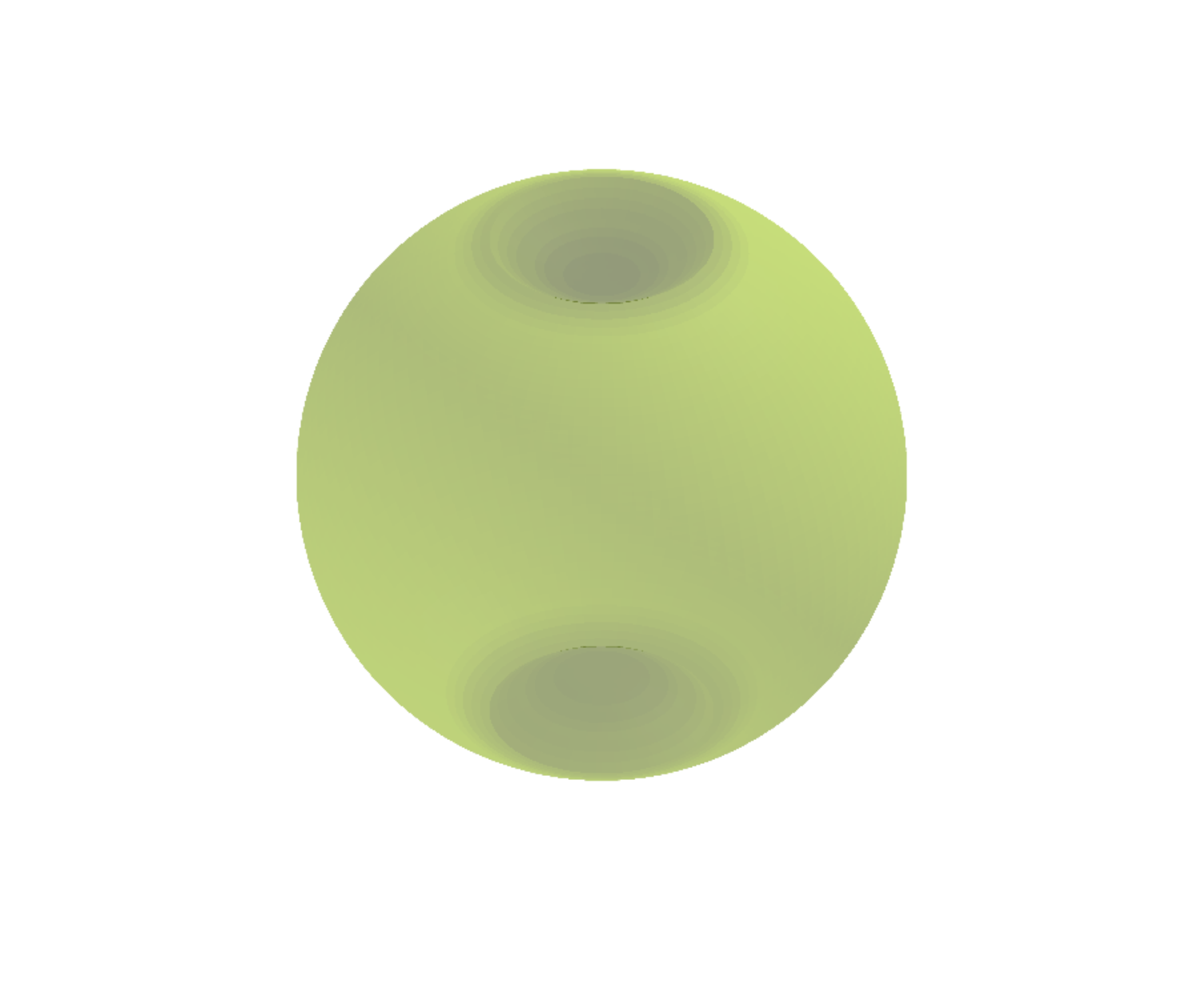}
  \includegraphics[width=.3\linewidth]{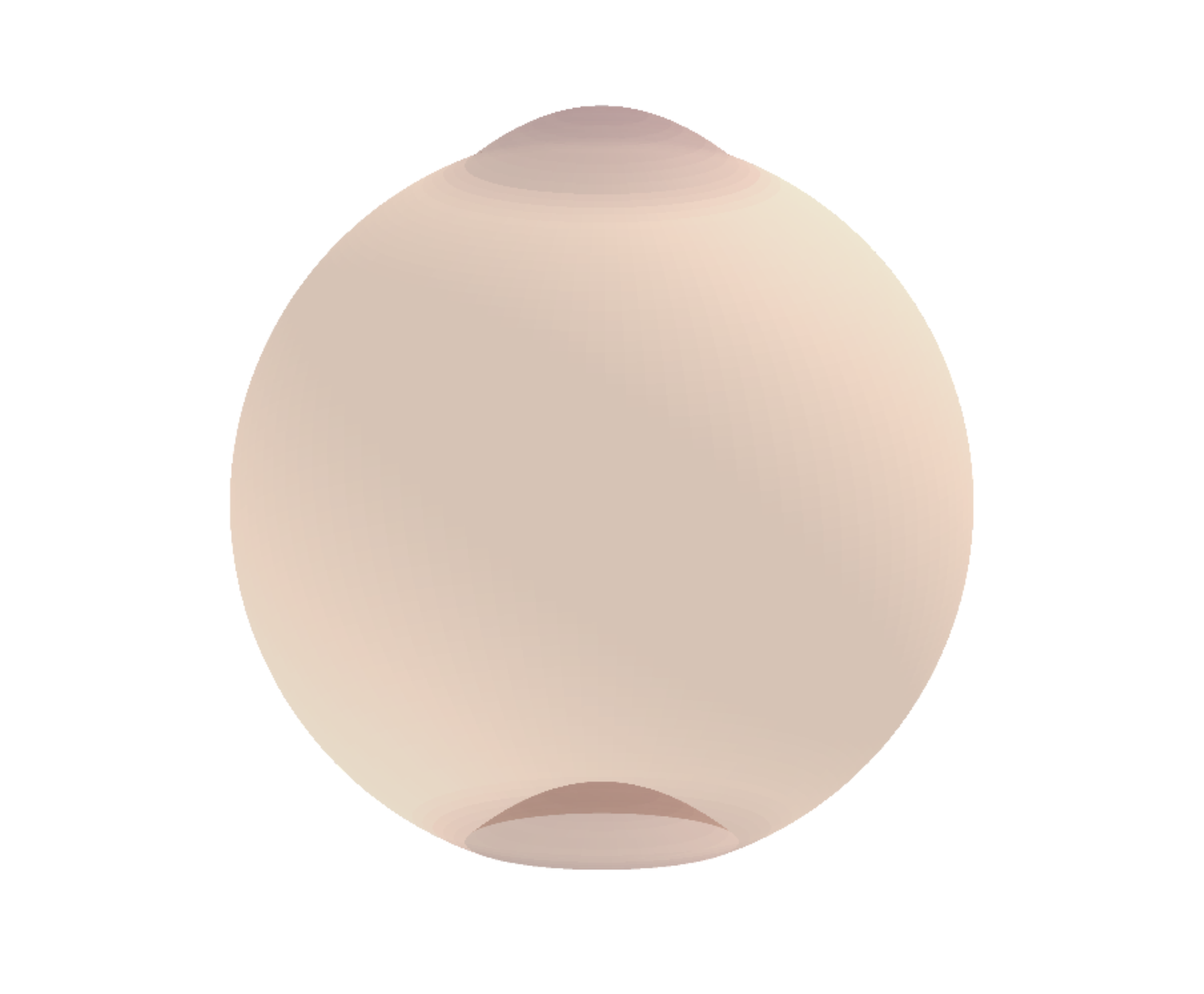}
  \caption{Three fundamental fruits illustrating Definition~\ref{def:imple}: a lemon with a pair of pimples (left),
    an apple with a pair of dimples (middle), and a peach with one pimple and one dimple (right).}
  \label{fig:fruity}
\end{figure}
\begin{remark}
  The dimple or pimple \(\mathcal{I}_x\) does not necessarily contain only one bottom \(x\).
  It may be that \(\mathcal{I}_x\) has a single set of bottoms covering all or part of \(\mathcal{I}_x\),
  or there are multiple disjoint sets of bottoms within \(\mathcal{I}_x\).
  We require \(\mathcal{I}_x\) to be sufficiently small in Definition \ref{def:imple} to exclude the latter case,
  by shrinking the radius \(\epsilon\) of the neighborhood around \(x\).
  However, it is not possible to avoid disjoint sets of bottoms when, e.g.,
  the embedding of the hypermanifold is not analytic.
\end{remark}

\subsection{Multi-agent networks}
\label{sec:gve}
Evolving on the hypermanifold is a homogeneous multi-agent system with \(N\) agents,
associated with an undirected, connected, and weighted graph structure \(\G = \left(\V, \E, A \right)\).
The adjacency matrix \(A = [a_{ij}]\) is symmetrical and nonnegative.
The vertices \(\V\) are divided into two groups \(\V_\text{u} = \{1,2,\dotso M\}\)
and \(\V_\text{l} = \{M+1,\dotso N\}\) for \(1 < M < N\).
The edge set \(\E\) is partitioned into intragroup and intergroup sets
\(\E_+ = \{\{i,j\} \in \E \,\vert\, i,j \in \V_\text{u} \text{ or } i,j \in \V_\text{l}\}\) and
\(\E_- = \{\{i,j\} \in \E \,\vert\, i \in \V_\text{l}, j \in \V_\text{u}\}\).

Such a partition is introduced to enforce different coupling rules over edges in \(\E_+\) and \(\E_-\).
The couplings are positive over all edges in \(\E_+\), whereas those over \(\E_-\) can be either all positive or all negative.
This ``edge coloring'' equivalently generates a \emph{structurally balanced} graph~\cite{altafini12consensus}
if we allow the graph to be signed such that edge weights on elements in \(\E_+\) are all positive, and
edge weights on elements in \(\E_-\) are either all positive or all negative.
In fact, doing so would not affect any of our results conceptually, as such,
no loss of generality is incurred with the non-negativity requirement on \(A\).

\begin{remark}
  It is noted in~\cite[Lem.~1]{altafini12consensus} that there exists a gauge transformation
  that brings a structurally balanced signed graph to a nonnegative one.
  This operation is akin to simultaneously reshuffling group membership
  and assigning to the affected agents opposite positions in the Euclidean space.
  It is not applicable in our general case (except the sphere case of special interest in §\ref{sec:sph}),
  because no symmetry is assumed in the underlying nonlinear space, as detailed in §\ref{sec:hni}.
  The symmetry assumption on \(A\), however, is essential, as we deal with gradient flows, see §\ref{sec:grad}.
\end{remark}

\subsection{Gradient flow dynamics}
\label{sec:grad}
Let us denote the states of the agents individually by \(x_i\) and collectively by \(\chi \coloneqq (x_i)_{i=1}^N\).
The agents evolve according to a simple rule of gradient descent flow in continuous time.
Given a disagreement function \(V \colon \Hn \rightarrow \mathbb{R}\), the dynamics of each agent is
\begin{equation}
  \label{eq:grad}
  \dot{x}_i = -\grad_i V(\chi) = -\proj_i\left(\nabla_i V(\chi)\right),
\end{equation}
where \(\grad_i\) is the intrinsic gradient on the tangent space at the point \(x_i\), denoted \(\tansp{x_i}{\mathcal{H}^n}\);
\(\proj_i = I - \normal_i \normal_i'\) is a positive semidefinite projection matrix \cite{markdahl20consensus}
on \(\tansp{x_i}{\mathcal{H}^n}\).

As mentioned in §\ref{sec:gve}, we divide the \(N\) agents into two groups
so that members within the same group are attracted to each other,
whereas members in different groups can be made to either oppose or attract each other.
To model the situation with attractive intragroup coupling and repulsive intergroup coupling on \(\mathcal{H}^n\),
we use the disagreement function
\begin{equation}
  \label{eq:disagree}
  V_-(\chi) \coloneqq \frac{1}{2} \sum_{\{i,j\} \in \E_+} a_{ij} \norm{x_j - x_i}_2^2
  - \frac{1}{2} \sum_{\{i,j\} \in \E_-} a_{ij} \norm{x_j - x_i}_2^2.
\end{equation}
Another appealing option in the literature is to retain the sum of square structure \cite{altafini12consensus} by changing the second term
\begin{equation}
  \label{eq:disagreealtafini}
  \frac{1}{2} \sum_{\{i,j\} \in \E_+} a_{ij} \norm{x_j - x_i}_2^2
  + \frac{1}{2} \sum_{\{i,j\} \in \E_-} a_{ij} \norm{x_j + x_i}_2^2.
\end{equation}
However, it is not applicable in our case in the absence of any symmetry in the nonlinear space.
Take a two particle system evolving on the unit circle as an example, every antipodal formation minimizes \eqref{eq:disagreealtafini}
only if the unit circle is centered at the origin.
Right shift by one unit, and consensus at the origin becomes the new minimizer.
To be coordinate agnostic, we choose \eqref{eq:disagree} to arrive at
\begin{equation}
  \label{eq:main}
  \begin{split}
    \dot{x}_i &= \left(I - \normal_i \normal_i'\right) \left(
      \sum_{j\in\V_\text{u}} a_{ij} \left(x_j - x_i\right) -
      \sum_{j\in\V_\text{l}} a_{ij} \left(x_j - x_i\right)
    \right) \quad \forall i\in\V_\text{u} \\
    \dot{x}_i &= \left(I - \normal_i \normal_i'\right) \left(
      \sum_{j\in\V_\text{l}} a_{ij} \left(x_j - x_i\right) -
      \sum_{j\in\V_\text{u}} a_{ij} \left(x_j - x_i\right)
    \right) \quad \forall i\in\V_\text{l}.
  \end{split}
\end{equation}
The sphere model with homogeneous frequency matrices \cite[(2.3)]{ha20emergence} is a special case of \eqref{eq:main}.
For the situation with purely attractive coupling, we simply change the sign of the second term in~\eqref{eq:disagree}:
\begin{equation}
  \label{eq:disagreepp}
  V_+(\chi) = \frac{1}{2} \sum_{\{i,j\} \in \E} a_{ij} \norm{x_j - x_i}_2^2,
\end{equation}
following which the state equation for agent \(i\) becomes
\begin{equation}
  \label{eq:mainpp}
  \dot{x}_i = \left(I - \normal_i \normal_i'\right) \sum_{j \in \V} a_{ij} \left(x_j - x_i\right) \quad \forall i \in \V.
\end{equation}
Or more compactly, \(\dot{\chi} = -\proj L \otimes I \chi\), where \(\proj = \diag(\proj_i)\) and \(L\) is the familiar graph Laplacian.
It is readily recognizable as a nonlinear higher-dimensional version of the Abelson model \cite{abelson67mathematical},
where the nonlinearity comes from the underlying nonlinear core belief space.
Moreover, the sphere version of \eqref{eq:mainpp} is a Kuramoto model with homogeneous frequencies
\cite{hemmen93lyapunov, markdahl21almost}.

\subsection{The assemblage}
\label{sec:prel}
Assembling the aforementioned ingredients in §\ref{sec:gve}-\ref{sec:grad},
we have a multi-agent gradient flow system with attractive~\eqref{eq:main} or repulsive~\eqref{eq:mainpp} intergroup interactions
evolving on a hypermanifold \(\mathcal{H}^n\).
The hypermanifold is equipped with a pair of dimples or pimples as illustrated in Fig.~\ref{fig:fruity},
each containing one of the two groups of agents \(\Vu\) and \(\Vl\).
We are interested in possible polarization arising in this setting
as a result of the interplay between the graph couplings and the geometry of the underlying nonlinear space.
\begin{definition}[polarization]
  \label{def:pol}
  The agents are said to be polarized if \(x_i = x_j\) for all \(\{i,j\} \in \E_+\)
  and \(x_i \neq x_j\) for all \(\{i,j\} \in \E_-\).
\end{definition}
This definition is less in common with the notion of bipartite consensus over linear Euclidean space in \cite{altafini12consensus}
meaning all agents are equal up to signs,
or with the strong/weak polarization \cite{gaitonde21polarization} and complete polarization \cite{ha20emergence}
over the hypersphere which are antipodal.
Those definitions are convenient for highly symmetric spaces,
whereas we work with more general hypermanifolds without symmetry and cannot define polarization this way.
However, our definition enjoys more flexibility,
as we can model opposition over seemingly trivial differences (from the external point of view)
with Definition~\ref{def:pol} plus \(\norm{x_i-x_j} < \epsilon\) for \(\{i,j\}\in\E_-\).
In this sense, the partial polarization defined in \cite{ha20emergence} is more closely related,
where the two groups on the sphere need not be antipodal and is possible in their case due to the presence of frequency terms.

Definition~\ref{def:pol} characterizes a polarized configuration
without specifying whether the states are in equilibrium, limit-cycle, or other non-stationary modes.
We focus on polarized equilibria and their stability properties,
which are the only possible final outcome as shown by the following proposition.
Let \(u_i\) denote the aggregate attraction for \(x_i\)
\begin{equation}
  \label{eq:uiplus}
  u_i = \sum_{j \in \V} a_{ij} \left(x_j - x_i\right).
\end{equation}
Analogous to~\cite[Prop.~11]{markdahl18almost}, we have the following result.
\begin{proposition}
  \label{prop:isequilib}
  The system~\eqref{eq:mainpp} converges to an equilibrium set on \(\left(\mathcal{H}^n\right)^N\).
  At the equilibrium, \(u_i\) defined in \eqref{eq:uiplus} is either zero or parallel to \(\normal_i\).
\end{proposition}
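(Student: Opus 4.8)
The plan is to treat \eqref{eq:mainpp} as an intrinsic gradient descent flow and apply LaSalle's invariance principle on the compact configuration space $(\Hn)^N$. First I would identify the Lyapunov function: since the ambient gradient of \eqref{eq:disagreepp} is $\nabla_i V_+(\chi) = -u_i$ with $u_i$ as in \eqref{eq:uiplus}, the dynamics \eqref{eq:mainpp} read $\dot{x}_i = \proj_i u_i = -\proj_i \nabla_i V_+(\chi) = -\grad_i V_+(\chi)$, so $V_+$ itself is the natural candidate.

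Second, I would establish that $V_+$ is non-increasing along trajectories. Because each velocity $\dot{x}_i$ lies in $\tansp{x_i}{\Hn}$ while $\proj_i = I - \normal_i\normal_i'$ is a symmetric idempotent onto that tangent space, differentiating $V_+$ along the flow yields
\begin{equation*}
  \dot{V}_+ = \sum_{i \in \V} \langle \nabla_i V_+(\chi), \dot{x}_i\rangle = -\sum_{i \in \V} \norm{\proj_i \nabla_i V_+(\chi)}^2 = -\sum_{i \in \V} \norm{\grad_i V_+(\chi)}^2 \le 0,
\end{equation*}
with equality precisely on the set $E = \{\chi : \proj_i u_i = 0 \ \forall i \in \V\}$.

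Third, I would invoke compactness and LaSalle. As $\Hn$ is closed, $(\Hn)^N$ is compact, and it is invariant under \eqref{eq:mainpp} since each $\dot{x}_i$ is tangent to $\Hn$; together with $\dot{V}_+ \le 0$ this lets LaSalle's principle guarantee that every trajectory approaches the largest invariant subset of $E$. On $E$ one has $\dot{x}_i = -\grad_i V_+(\chi) = 0$, so $E$ consists entirely of equilibria and is therefore itself invariant; hence the $\omega$-limit set of every trajectory is contained in the equilibrium set $E$. Finally, at any $\chi \in E$ the condition $(I - \normal_i\normal_i')u_i = 0$ rearranges to $u_i = (\normal_i' u_i)\,\normal_i$, so $u_i$ is either zero or parallel to $\normal_i$, as claimed.

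The step requiring the most care is the Lyapunov computation, specifically the passage $\langle \nabla_i V_+(\chi), \dot{x}_i\rangle = -\norm{\grad_i V_+(\chi)}^2$, which relies on $\proj_i$ being the orthogonal (symmetric, idempotent) projection and on $\dot{x}_i$ being genuinely tangent, so that the normal component of $\nabla_i V_+(\chi)$ does no work. The remaining subtlety is purely about the mode of convergence: LaSalle delivers approach to the equilibrium set rather than to a single equilibrium, which is exactly the (weaker) assertion of the proposition and sidesteps any need for a {\L}ojasiewicz-type argument.
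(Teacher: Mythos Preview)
Your proof is correct and follows essentially the same route as the paper: use $V_+$ as a Lyapunov function, show $\dot V_+\le 0$ via the gradient-flow structure, and apply LaSalle on the compact manifold $(\Hn)^N$ to conclude convergence to the set where $\proj_i u_i=0$. Your derivative computation is slightly more streamlined (invoking $\dot V_+=-\sum_i\norm{\grad_i V_+}^2$ directly rather than the paper's edge-by-edge regrouping), but the argument is the same in substance.
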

\begin{proof}
  For \(V_+(\chi)\) in~\eqref{eq:disagreepp},
  \[
    \begin{split}
      \dot{V}_+ &= \sum_{\{i,j\} \in \E} a_{ij} \left(\langle{x_j-x_i, \dot{x}_j}\rangle - \langle{x_j-x_i, \dot{x}_i}\rangle\right) \\
      &= \sum_{i \in \V} \langle{\dot{x}_i, \sum_{j\in\V} a_{ij}(x_i - x_j)}\rangle +
      \sum_{j \in \V} \langle{\dot{x}_j, \sum_{i\in\V} a_{ij}(x_j - x_i)}\rangle \\
      &= -2 \sum_{i\in\V} \langle{\dot{x}_i, \sum_{j\in\V} a_{ij}(x_j - x_i)}\rangle \\
      &= -2 \sum_{i\in\V} \langle{\proj_i u_i, u_i}\rangle.
    \end{split}
  \]
  Therefore, we have \(\dot{V}_+ \leq 0\).
  By LaSalle's invariance principle, system~\eqref{eq:mainpp} converges to the set \(\{(x_i)_{i=1}^N \,|\, \dot{V}_+ = 0\}\).
  To achieve \(\dot{V}_+ = 0\), either \(u_i = 0\) or \(u_i \in \ker P_i\).
  In the latter case, as \(\ker P_i = \linspan{(\normal_i)}\), we have that \(u_i\) is parallel to \(\normal_i\).
  Inspecting~\eqref{eq:mainpp}, we see that both cases lead to \(\dot{x} = 0\) for all \(i \in \V\).
  Therefore, the system~\eqref{eq:mainpp} converges to an equilibrium set on \(\left(\mathcal{H}^n\right)^N\).
\end{proof}
As noted in §\ref{sec:gve}, we may use signed graphs for intergroup antagonism,
then \eqref{eq:disagree} and \eqref{eq:main} would respectively reduce in form to \eqref{eq:disagreepp} and \eqref{eq:mainpp}.
And the same conclusion as Prop.~\ref{prop:isequilib} can be reached via identical arguments for system \eqref{eq:main} with \(V_-\).
\begin{remark}
  We emphasize that Prop.~\ref{prop:isequilib} dictates convergence to a set, rather than to a point of equilibrium.
  Hence, it does not exclude non-stationary steady states
  such as traveling waves \cite{hong11kuramoto} and dancing equilibria \cite{aydo17opinion}.
\end{remark}

For the definitions of Lyapunov and asymptotic stability, while those of an equilibrium point are well known,
those of a set of equilibria is perhaps less standard.
Introduce the \emph{Hausdorff distance} between two sets \(\mathcal{Y}, \mathcal{Z} \subset \mathbb{R}^n\),
\[
  d_\text{H}(\mathcal{Y}, \mathcal{Z}) \coloneqq
  \max\{\sup_{y\in\mathcal{Y}} \lowinf_{z\in\mathcal{Z}} \norm{y-z}, \sup_{z\in\mathcal{Z}} \lowinf_{y\in\mathcal{Y}} \norm{y-z}\}.
\]
\begin{definition}[stability]
  A set of equilibria \(\mathcal{S}\) is Lyapunov stable if, for each \(\epsilon > 0\), there is \(\delta = \delta(\epsilon)\)
  such that \(d_\text{H}(x, \mathcal{S})|_{t=0} < \delta\)
  implies \(d_\text{H}(x, \mathcal{S})|_t < \epsilon\) for all \(t \geq 0\);
  is asymptotically stable if it is stable and \(\delta\) can be chosen
  such that \(d_\text{H}(x, \mathcal{S})|_{t=0} < \delta\) implies
  \(\lim_{t \rightarrow \infty} d_\text{H}(x, \mathcal{S}) = 0\).
\end{definition}
We collect a few previous results and associated definitions that will pave the way for the development of our main results.
\begin{definition}[local minimizer] %
  A set \(\mathcal{S} \subset \mathcal{X}\) of minimizers of a real function \(f \colon \mathcal{X} \rightarrow \mathbb{R}\)
  from a metric space \((\mathcal{X}, d_\text{H})\) is said to be a \emph{local minimizer} if for some \(\epsilon > 0\)
  there is an open neighborhood \(\mathcal{N}(\mathcal{S}) = \{x \in \mathcal{M} \,|\, d_\text{H}(x, \mathcal{S}) < \epsilon\}\)
  such that \(f|_\mathcal{S} \leq f(x)\) for all \(x \in \mathcal{N}(\mathcal{S})\).
  Moreover, if the inequality is strict for all \(x \in \mathcal{N}(\mathcal{S}) \backslash \mathcal{S}\),
  then \(\mathcal{S}\) is said to be a \emph{strict local minimizer}.
\end{definition}
\begin{definition}[isolated critical] %
  \label{def:isocri}
  A set \(\mathcal{S} \subset \mathcal{X}\) of critical points of a real function \(f \colon \mathcal{X} \rightarrow \mathbb{R}\)
  from a metric space \((\mathcal{X}, d_\text{H})\) is said to be \emph{isolated critical} if for some \(\epsilon > 0\)
  there is an open neighborhood \(\mathcal{N}(\mathcal{S}) = \{x \in \mathcal{M} \,|\, d_\text{H}(x, \mathcal{S}) < \epsilon\}\)
  such that \(\mathcal{N}(\mathcal{S}) \backslash \mathcal{S}\) is void of critical points.
\end{definition}
\begin{proposition}[Prop. 7~\cite{markdahl20synchronization}]
  \label{prop:lyapunov}
  Let \(\mathcal{M}\) be closed and take any \(V \colon \mathcal{M} \rightarrow \mathbb{R}\) that is \(C^2\).
  Let \(\mathcal{S}\) be a compact set of local minimizers of \(V\).
  If \(\mathcal{S}\) is a strict local minimizer,
  then \(\mathcal{S}\) is a Lyapunov stable equilibrium set of \(\dot{x} = -\grad V\).
  If \(\mathcal{S}\) is also isolated critical, then it is asymptotically stable.
\end{proposition}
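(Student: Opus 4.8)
The plan is to use $V$ itself as a Lyapunov function for the set $\mathcal{S}$, exploiting that it is nonincreasing along the flow, and to transcribe the classical pointwise Lyapunov and LaSalle arguments to the set-valued setting measured by $d_\text{H}$. The starting point is the dissipation identity: along any trajectory of $\dot{x} = -\grad V$,
\[
  \frac{\D}{\D t}\, V(x(t)) = \langle \grad V(x(t)), \dot{x}(t)\rangle = -\norm{\grad V(x(t))}^2 \leq 0,
\]
so $V$ is nonincreasing. Since $\mathcal{S}$ is a set of minimizers, $V$ is constant on it, say $V|_{\mathcal{S}} = v^\ast$; replacing $V$ by $V - v^\ast$ I may assume $v^\ast = 0$, so that $V \geq 0$ on the strict-minimizer neighborhood and $V = 0$ exactly on $\mathcal{S}$ there.

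For Lyapunov stability I would fix $\epsilon > 0$ small enough that the closed neighborhood $\{x : d_\text{H}(x,\mathcal{S}) \leq \epsilon\}$ lies inside the strict-minimizer neighborhood. Since $\M$ is closed, hence compact, this neighborhood and its boundary shell $\Sigma_\epsilon = \{x : d_\text{H}(x,\mathcal{S}) = \epsilon\}$ are compact, and strictness gives $m \coloneqq \min_{\Sigma_\epsilon} V > 0$. Continuity of $V$ together with $V|_{\mathcal{S}} = 0$ then lets me choose $\delta \in (0,\epsilon)$ with $V < m$ on $\{x : d_\text{H}(x,\mathcal{S}) < \delta\}$. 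If $d_\text{H}(x(0),\mathcal{S}) < \delta$, then $V(x(0)) < m$, and monotonicity forces $V(x(t)) < m$ for all $t \geq 0$; as $V \geq m$ on $\Sigma_\epsilon$, the trajectory can never reach $\Sigma_\epsilon$, so by continuity it stays inside the $\epsilon$-neighborhood, yielding $d_\text{H}(x(t),\mathcal{S}) < \epsilon$ for all $t$.

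For asymptotic stability I would add the isolated-critical hypothesis and invoke LaSalle. Shrinking $\epsilon$ if needed so that $\{x : d_\text{H}(x,\mathcal{S}) \leq \epsilon\}$ contains no critical points of $V$ outside $\mathcal{S}$ (intersecting the two neighborhoods furnished by the strict-minimizer and isolated-critical assumptions), this set is compact and, by the stability just proved, positively invariant. LaSalle's invariance principle then guarantees that every trajectory starting in it converges to the largest invariant subset of $\{x : \dot{V} = 0\}$ within it. From the dissipation identity, $\dot{V} = 0$ iff $\grad V = 0$, i.e.\ the critical set, which by the isolated-critical assumption meets the neighborhood only in $\mathcal{S}$. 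Hence $d_\text{H}(x(t),\mathcal{S}) \to 0$, and combined with Lyapunov stability this gives asymptotic stability.

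The main obstacle I anticipate is not a deep estimate but the set-valued bookkeeping: confirming that the shell $\Sigma_\epsilon$ is nonempty and compact with $\min_{\Sigma_\epsilon} V > 0$, that a sublevel set of $V$ traps the trajectory inside the prescribed $d_\text{H}$-neighborhood, and that LaSalle's principle applies verbatim with a Lyapunov function that is constant on $\mathcal{S}$ and bounded below on the compact invariant region. Each of these reduces to compactness of $\M$ and continuity of $V$, so the argument should go through once the neighborhoods from the two hypotheses are taken simultaneously small.
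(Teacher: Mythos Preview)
The paper does not supply its own proof of this proposition; it is quoted from an external reference (Prop.~7 of \cite{markdahl20synchronization}) and used as a black box throughout \S\ref{sec:fruits}. Your argument is the standard Lyapunov--LaSalle proof and is essentially correct: the dissipation identity makes $V$ nonincreasing, the shell argument on the compact boundary $\Sigma_\epsilon$ yields Lyapunov stability, and under the isolated-critical hypothesis the only invariant subset of $\{\grad V = 0\}$ near $\mathcal{S}$ is $\mathcal{S}$ itself.

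One small gloss worth tightening: the closed $\epsilon$-neighborhood of $\mathcal{S}$ is not shown to be positively invariant by the stability step---stability only traps trajectories that start in the smaller $\delta$-neighborhood. The clean fix is to bypass the invariant-set formulation of LaSalle and argue directly on trajectories: any solution with $d_\text{H}(x(0),\mathcal{S}) < \delta$ is precompact (it remains in the compact closed $\epsilon$-neighborhood), so its $\omega$-limit set is nonempty, invariant, and contained in $\{\dot V = 0\} = \{\grad V = 0\}$, hence in $\mathcal{S}$ by the isolated-critical assumption. You already flag exactly this bookkeeping in your final paragraph, so the proposal stands. Your reading of ``closed'' as ``compact without boundary'' is also the intended one, consistent with the paper's hypermanifold setting.
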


\section{Fruits}
\label{sec:fruits}
In this section, we present and discuss our main results concerning the stability properties of polarized equilibria.
They arise in different combinations of attractive/repulsive interactions with dimple/pimple geometric features,
best exemplified by the fundamental fruits portrayed in Fig.~\ref{fig:fruity}.
Despite the symmetrical shapes that the fruits assume in the figure,
we emphasize that our general results do not require any spatial symmetry of the hypermanifold embedding.
\subsection{Apple}
\label{sec:dimapp}
Let us consider the setting of a pair of dimples on the hypersurface,
one containing the group \(\V_\text{u}\) and the other \(\V_\text{l}\).
Thus, we operate under the following assumption in this section:
\begin{assumption}
  \label{ass:dim}
  The sets \(\mathcal{I}_\text{u}\) and \(\mathcal{I}_\text{l}\) are a pair of dimples, and
  \(x_i \in \mathcal{I}_\text{u}\) for all \(i \in \V_\text{u}\), \(x_i \in \mathcal{I}_\text{l}\) for all \(i \in \V_\text{l}\).
\end{assumption}
Now we investigate stability of a polarized equilibrium that exists in the system~\eqref{eq:mainpp}
if the manifold resembles the apple in Fig.~\ref{fig:fruity}.
The point of interest is the following:
\begin{equation}
  \label{eq:capp}
  \begin{split}
    \chi^\ast &\coloneqq
    \{\chi \in (\mathcal{H}^n)^N \,|\, x_i = \xu\> \forall i \in \V_\text{u}, x_i = \xel\> \forall i \in \V_\text{l}\}.
  \end{split}
\end{equation}

Since the graph \(\G\) is connected, \(\E_-\) is non-empty.
Consequently, under polarization as per Definition~\ref{def:pol}, \(u_i\) defined in \eqref{eq:uiplus} is non-zero,
and hence must be parallel to \(\normal_i\) according to Prop.~\ref{prop:isequilib}.
This observation motivates the conditions set forth in the ensuing results.
\begin{proposition}
  \label{clm:dimhmin}
  For system~\eqref{eq:mainpp} under Assumption~\ref{ass:dim},
  if there exists a pair of distinct dimple bottoms
  \(x_\text{u} \in \mathcal{I}_\text{u}\) and \(x_\text{l} \in \mathcal{I}_\text{l}\) such that
  \begin{enumerate}[(i)]
  \item \(x_\text{u} - x_\text{l}\) is parallel to \(\normal(x_\text{u})\), and
  \item \(h_\text{u}(x_\text{l})\) is a local maximum satisfying \(h_\text{u}(x_\text{l}) < h_\text{u}(x_\text{u})\),
\end{enumerate}

  then a strict local minimum of \(V_+\) is
  \(V_+^\ast \coloneqq \frac{1}{2} (h_\text{u}(x_\text{u}) - h_\text{u}(x_\text{l}))^2 \sum_{\{i,j\} \in \E_-} a_{ij}\),
  and the corresponding strict local minimizer is \(\chi^\ast\) defined in~\eqref{eq:capp}.
\end{proposition}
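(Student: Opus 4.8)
The plan is to verify the value $V_+^\ast$ at $\chi^\ast$ and then to show that $V_+$ is bounded below by $V_+^\ast$ on a neighborhood of $\chi^\ast$, with the dimple geometry and connectivity of $\G$ forcing equality only at $\chi^\ast$. For the value, I would use condition (i) to write $\xu - \xel = \lambda\normal_\text{u}$; since $\normal_\text{u}$ is a unit vector, $\lambda = \langle \normal_\text{u}, \xu - \xel\rangle = h_\text{u}(\xu) - h_\text{u}(\xel)$, which is positive by (ii), so $\norm{\xu - \xel}^2 = \lambda^2$. At $\chi^\ast$ every intragroup pair coincides and every intergroup pair is separated by $\xu - \xel$, so evaluating \eqref{eq:disagreepp} gives $V_+(\chi^\ast) = \frac{1}{2}\lambda^2 \sum_{\{i,j\}\in\E_-} a_{ij} = V_+^\ast$.

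For the lower bound I would fix a neighborhood small enough that $x_i \in \Iu$ for $i \in \Vu$ and $x_i \in \Il$ for $i \in \Vl$, so that the local extremal properties apply. The intragroup terms of \eqref{eq:disagreepp} are nonnegative. For an intergroup edge $\{i,j\}$ with $i \in \Vu$, $j \in \Vl$, projecting onto the fixed unit vector $\normal_\text{u}$ gives $\norm{x_j - x_i}^2 \geq \langle \normal_\text{u}, x_i - x_j\rangle^2 = (h_\text{u}(x_i) - h_\text{u}(x_j))^2$. Since $\xu$ is a strict local minimizer of $h_\text{u}$ on $\Iu$ we have $h_\text{u}(x_i) \geq h_\text{u}(\xu)$, and since $h_\text{u}(\xel)$ is a local maximum of $h_\text{u}$ on $\Il$ we have $h_\text{u}(x_j) \leq h_\text{u}(\xel)$; adding, $h_\text{u}(x_i) - h_\text{u}(x_j) \geq \lambda > 0$. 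Each intergroup term is therefore at least $\frac{1}{2} a_{ij}\lambda^2$, and summing yields $V_+(\chi) \geq V_+^\ast$.

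For strictness I would analyze the equality case. If $V_+(\chi) = V_+^\ast$ then every intragroup term vanishes and every intergroup estimate is tight. Tightness of the projection forces $x_i - x_j$ parallel to $\normal_\text{u}$, while tightness of the height chain forces $h_\text{u}(x_i) = h_\text{u}(\xu)$; strictness of the minimum of $h_\text{u}$ at $\xu$ then gives $x_i = \xu$. Writing $x_j = \xu + \mu\normal_\text{u}$ and using $h_\text{u}(x_j) = h_\text{u}(\xu) - \lambda$ forces $\mu = -\lambda$, hence $x_j = \xu - \lambda\normal_\text{u} = \xel$ by (i). Thus both endpoints of every intergroup edge are pinned to the correct bottom. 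Letting $T$ be the set of agents so pinned, $T \neq \emptyset$ because $\E_-$ is nonempty; any edge from $T$ to its complement is either intragroup (the vanishing term forces the neighbor to share the same pinned value and group) or intergroup (endpoints already pinned), so $T = \V$ by connectivity of $\G$ and $\chi = \chi^\ast$. Hence $V_+(\chi) > V_+^\ast$ for $\chi \neq \chi^\ast$ nearby, i.e.\ $\chi^\ast$ is a strict local minimizer (and in particular an equilibrium of \eqref{eq:mainpp}, consistent with Prop.~\ref{prop:isequilib}).

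The step I expect to be the main obstacle is the equality analysis: the lower bound spends only a single projection onto $\normal_\text{u}$ and two one-sided height estimates, so recovering $\chi = \chi^\ast$ from tightness must combine the strict dimple property at $\xu$, the parallelism condition (i), and a connectivity argument. In particular the subgraphs induced on $\Vu$ and $\Vl$ need not be connected, so the pinning has to be propagated on the full graph $\G$ rather than within each group, which is why the argument is organized around the invariant set $T$ and a single cut edge.
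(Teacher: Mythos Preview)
Your proposal is correct and follows essentially the same approach as the paper: project each intergroup difference onto $\normal(\xu)$, bound the resulting heights using the dimple minimum at $\xu$ and the local maximum at $\xel$, and treat intragroup terms as nonnegative. Your equality analysis is in fact more careful than the paper's, which simply asserts that the minimum is achieved only at $\chi^\ast$ without spelling out the connectivity propagation you organize via the set $T$.
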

\begin{proof}
  For all \(\{i,j\} \in \E_-\), assume without loss of generality that \(i \in \V_\text{u}\) and \(j \in \V_\text{l}\).
  The term \(\norm{x_j - x_i}_2^2\) in~\eqref{eq:disagreepp} is lower bounded by
  \begin{equation}
    \label{eq:pf:dd}
    \norm{x_j - x_i}_2^2 \geq \langle x_j - x_i, \normal(x_\text{u}) \rangle^2
    = \left(h_\text{u}(x_j) - h_\text{u}(x_i)\right)^2 \geq (h_\text{u}(x_\text{u}) - h_\text{u}(x_\text{l}))^2.
  \end{equation}
  The first lower bound is achieved only when \(x_j - x_i\) is parallel to \(\normal(\xu)\).
  For the second lower bound, observe that \(\xu\) is the bottom of the dimple \(\Iu\).
  Consequently, \(h_\text{u}(\xu)\) is by Definition~\ref{def:imple} a strict local minimum achieved only by \(\xu\).
  Combining this observation with condition (ii) in the proposition,
  we conclude that the second lower bound is achieved only when \(h_\text{u}(x_i) = h_\text{u}(\xu)\), implying \(x_i = \xu\),
  and \(h_\text{u}(x_j) = h_\text{u}(\xel)\).
  With condition (i) in the proposition, the only configuration to achieve both lower bounds
  is thus \(x_i\) and \(x_j\) being in their respective bottoms \(\xu\) and \(\xel\).

  For all \(\{i,j\} \in \E_+\),
  \[
    \norm{x_j - x_i}_2^2 \geq 0,
  \]
  where the equality is achieved when \(x_i = x_j\),
  of which a special case is \(x_i = x_j = x_\text{u}\) for \(i, j \in \V_\text{u}\)
  and \(x_i = x_j = x_\text{l}\) for \(i, j \in \V_\text{l}\).
  Therefore,
  \[
    V_+(\chi) \geq \frac{1}{2} \sum_{\{i,j\} \in \E_-} a_{ij} \norm{x_j - x_i}_2^2 \geq
    \frac{1}{2} (h_\text{u}(x_\text{u}) - h_\text{u}(x_\text{l}))^2 \sum_{\{i,j\} \in \E_-} a_{ij}.
  \]
  The minimum is achieved only when \(x_i = \xu, \forall i \in \V_\text{u}\) and \(x_i = \xel, \forall i \in \V_\text{l}\).
\end{proof}
\begin{corollary}
  \label{cor:dimhlyap}
  For system~\eqref{eq:mainpp} under Assumption~\ref{ass:dim},
  if the two dimples \(\mathcal{I}_\text{u}\) and \(\mathcal{I}_\text{l}\) satisfy the properties given in Prop.~\ref{clm:dimhmin},
  then \(\chi^\ast\) defined in~\eqref{eq:capp} is a Lyapunov stable polarized equilibrium.
\end{corollary}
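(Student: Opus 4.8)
The plan is to invoke Proposition~\ref{prop:lyapunov} directly, drawing its single nontrivial hypothesis from Proposition~\ref{clm:dimhmin}. First I would recall that system~\eqref{eq:mainpp} is exactly the intrinsic gradient descent flow $\dot{\chi} = -\grad V_+$ of the disagreement function $V_+$ in~\eqref{eq:disagreepp} over the product manifold $(\Hn)^N$; this identification is already embedded in the derivation of §\ref{sec:grad} through the projections $\proj_i$.

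Next I would check the structural hypotheses of Proposition~\ref{prop:lyapunov}. The domain $(\Hn)^N$ is closed, being a finite product of the closed hypersurface $\Hn$. The function $V_+$ is a quadratic polynomial in the ambient coordinates, so its restriction to the $C^2$ manifold $(\Hn)^N$ is $C^2$. The candidate $\chi^\ast$ of~\eqref{eq:capp} is a single configuration---all agents of $\Vu$ resting at the fixed bottom $\xu$ and all of $\Vl$ at $\xel$---hence a compact (singleton) set, on which the Hausdorff distance collapses to the ordinary norm distance. The decisive ingredient, that $\chi^\ast$ is a \emph{strict} local minimizer of $V_+$, is precisely the content of Proposition~\ref{clm:dimhmin}; in particular $\chi^\ast$ is a critical point and therefore an equilibrium of~\eqref{eq:mainpp}. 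With these facts in hand, Proposition~\ref{prop:lyapunov} yields that $\chi^\ast$ is a Lyapunov stable equilibrium set.

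Finally I would confirm that $\chi^\ast$ is polarized per Definition~\ref{def:pol}. Every edge in $\E_+$ joins two agents of the same group, so $x_i = x_j$ there; every edge in $\E_-$ joins an agent at $\xu$ to one at $\xel$, and since Proposition~\ref{clm:dimhmin} takes $\xu$ and $\xel$ distinct, $x_i \neq x_j$ for all $\{i,j\} \in \E_-$. Hence $\chi^\ast$ satisfies Definition~\ref{def:pol} and is a Lyapunov stable polarized equilibrium.

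I anticipate no real obstacle: the analytic substance has already been expended in Proposition~\ref{clm:dimhmin}, leaving the corollary a matter of matching the present setting to the template of Proposition~\ref{prop:lyapunov}. The only step warranting a moment's attention is reconciling the Hausdorff-distance formulation of stability with the pointwise minimization proved earlier, a discrepancy that evaporates because $\chi^\ast$ is a singleton.
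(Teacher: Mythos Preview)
Your proposal is correct and follows exactly the paper's approach: the paper's proof consists of the single sentence ``This is a direct application of Prop.~\ref{prop:lyapunov} to Prop.~\ref{clm:dimhmin},'' and you have simply spelled out the routine verifications (closedness of $(\Hn)^N$, $C^2$ smoothness of $V_+$, compactness of the singleton $\chi^\ast$, and the polarization check) that justify this one-line invocation.
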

\begin{proof}
  This is a direct application of Prop.~\ref{prop:lyapunov} to Prop.~\ref{clm:dimhmin}.
\end{proof}
Next, we derive an asymptotic stability result for when \(\Iu\) and \(\Il\) live on nice manifolds.
\begin{theorem}
  \label{clm:dimhasy}
  For system~\eqref{eq:mainpp} under Assumption~\ref{ass:dim},
  if the two dimples \(\Iu\) and \(\Il\) satisfy the properties given in Prop.~\ref{clm:dimhmin}, and in addition,
  there is a neighborhood \(\mathcal{N}_\text{a}(\chi^\ast)\) on \((\mathcal{H}^n)^N\) that belongs to an analytic manifold,
  then \(\chi^\ast\) defined in \eqref{eq:capp} is an asymptotically stable polarized equilibrium.
\end{theorem}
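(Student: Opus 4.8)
The plan is to invoke the second clause of Prop.~\ref{prop:lyapunov}. Prop.~\ref{clm:dimhmin} already establishes that \(\chi^\ast\) is a compact strict local minimizer of \(V_+\), and Cor.~\ref{cor:dimhlyap} recorded the resulting Lyapunov stability; so the only missing ingredient for asymptotic stability is that \(\chi^\ast\) be \emph{isolated critical} in the sense of Def.~\ref{def:isocri}. This is exactly the gap that the analyticity hypothesis closes, since a merely \(C^2\) strict local minimizer can have critical points accumulating onto it, whereas an analytic one cannot.

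First I would note that \(V_+\) in~\eqref{eq:disagreepp} is a polynomial, hence real-analytic, in the ambient coordinates \((x_i)_{i=1}^N\); restricting it to the analytic neighborhood \(\mathcal{N}_\text{a}(\chi^\ast) \subseteq (\Hn)^N\) therefore yields a real-analytic function. The key tool is then the \L ojasiewicz gradient inequality for analytic functions on analytic manifolds: there exist \(C>0\), \(\theta\in(0,\tfrac12]\), and a neighborhood \(\mathcal{U}\subseteq\mathcal{N}_\text{a}(\chi^\ast)\) of \(\chi^\ast\) such that
\begin{equation*}
  \norm{\grad V_+(\chi)} \;\geq\; C\,\lvert V_+(\chi) - V_+^\ast\rvert^{\,1-\theta}
  \qquad \text{for all } \chi \in \mathcal{U},
\end{equation*}
where \(\grad\) is the intrinsic gradient that drives~\eqref{eq:mainpp} and \(V_+^\ast\) is the minimum value from Prop.~\ref{clm:dimhmin}.

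The isolated-critical property then follows in one line. If \(\chi\in\mathcal{U}\) satisfies \(\grad V_+(\chi)=0\), the inequality forces \(\lvert V_+(\chi)-V_+^\ast\rvert^{\,1-\theta}\leq 0\), hence \(V_+(\chi)=V_+^\ast\); but strict local minimality (Prop.~\ref{clm:dimhmin}) means that, on a possibly smaller neighborhood, this value is attained only on \(\chi^\ast\) itself. Intersecting the two neighborhoods leaves a punctured neighborhood of \(\chi^\ast\) free of critical points, so \(\chi^\ast\) is isolated critical, and Prop.~\ref{prop:lyapunov} delivers asymptotic stability.

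I expect the main obstacle to be technical rather than conceptual: applying the \L ojasiewicz inequality in the correct geometric setting, using the intrinsic gradient \(\grad V_+\) of~\eqref{eq:mainpp} rather than the ambient \(\nabla V_+\), and checking that the product \((\Hn)^N\) inherits analyticity near \(\chi^\ast\) from the single-factor assumption. A secondary bookkeeping point is that \(\chi^\ast\) is formally a set; but since every configuration in \(\chi^\ast\) shares the value \(V_+^\ast\), the value-based argument above is unaffected, and the compactness needed by Prop.~\ref{prop:lyapunov} is immediate.
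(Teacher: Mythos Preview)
Your proposal is correct and mirrors the paper's own proof: both use the \L ojasiewicz gradient inequality on the analytic neighborhood to force any nearby critical point to share the value \(V_+^\ast\), then invoke the strictness from Prop.~\ref{clm:dimhmin} to conclude isolated criticality and apply Prop.~\ref{prop:lyapunov}. The extra care you flag about the intrinsic gradient and the analyticity of \(V_+\) is exactly what the paper handles (in one line) via \(\grad V_+ = \proj(\nabla V_+)\).
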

\begin{proof}
  Following a variant~\cite[Sec.~9]{kurdyka00proof} of the \L{}ojasiewicz inequality valid on analytic Riemannian manifolds,
  the analytic function \(V_+(\chi)\) in~\eqref{eq:disagreepp} behaves in the following way
  in a neighborhood of the polarized equilibrium \(\mathcal{N}_\text{\l}(\chi^\ast) \subset \nei_\text{a}(\chi^\ast)\):
  \[
    \lvert{V_+(\chi) - V_+(\chi^\ast)}\rvert^\alpha \leq \kappa \norm{\grad V_+(\chi)},
  \]
  for \(\alpha < 1\) and \(\kappa > 0\), and where \(\grad\) is the intrinsic gradient on the tangent space.
  For a Riemannian manifold that is also a hypersurface, \(\grad V_+(\chi) = \proj(\nabla V_+(\chi))\).
  Suppose that \(\chi \neq \chi^\ast\) is an equilibrium in \(\nei_\text{\l}(\chi^\ast)\),
  then \(\grad V_+(\chi) = \proj(\nabla V_+(\chi)) = 0\), c.f.~\eqref{eq:grad}.
  Consequently, \(V_+(\chi) = V_+(\chi^\ast)\).
  However, Prop.~\ref{clm:dimhmin} says that the local minimum \(V_+^\ast\) is achieved only if \(\chi = \chi^\ast\), a contradiction.
  Therefore, there is no equilibrium save for \(\chi^\ast\) in \(\nei_\text{\l}(\chi^\ast)\),
  rendering \(\chi^\ast\) isolated critical as per Definition~\ref{def:isocri}.
  Thus, \(\chi^\ast\) is asymptotically stable by Prop.~\ref{prop:lyapunov}.
\end{proof}
The results of this section suggest that even when two parties try to reach a common ground by making compromises,
thereby getting closer to each other in the Euclidean space,
the unbridgeable core belief landscape presents an obstruction to the eventual concord.
Especially interesting is when the Euclidean distance between the polarized points are small as seen from the extrinsic point of view,
it may represent an irreconcilable long lasting hostility between two groups with minor differences.

\subsection{Lemon}
\label{sec:pimlem}
Consider a pair of pimples on the hypersurface, one containing the group of agents \(\V_\text{u}\) and the other \(\V_\text{l}\).
The assumption in this section is then
\begin{assumption}
  \label{ass:pim}
  The sets \(\mathcal{I}_\text{u}\) and \(\mathcal{I}_\text{l}\) are a pair of pimples, and
  \(x_i \in \mathcal{I}_\text{u}\) for all \(i \in \V_\text{u}\), \(x_i \in \mathcal{I}_\text{l}\) for all \(i \in \V_\text{l}\).
\end{assumption}
For the dynamics, we are interested in \eqref{eq:main} with attractive intragroup coupling and repulsive intergroup coupling
corresponding to the disagreement function \eqref{eq:disagree}.
Thus, we may picture the system \eqref{eq:main} evolving on a lemon-like manifold (Fig.~\ref{fig:fruity}).

Denote a \emph{closed} ball centered at a point \(x\) with radius \(r\) as \(\mathcal{B}_r(x)\).
Let \(x_o = \frac{1}{2}(x_\text{u} + x_\text{l})\) denote the midpoint between \(x_\text{u}\) and \(x_\text{l}\),
and \(r_o = \frac{1}{2}\norm{x_\text{u} - x_\text{l}}\) the half distance between them.
The following results concern the set
\begin{equation}
  \label{eq:clem}
  \mathcal{C}_\text{lem} \coloneqq
  \{\chi \in (\mathcal{H}^n)^N \,|\, x_i = x\> \forall i \in \V_\text{u}, x_i = y\> \forall i \in \V_\text{l}, (x,y) \in Y\},
\end{equation}
where \(Y \coloneqq \{(x,y) \in \mathcal{I}_\text{u} \times \mathcal{I}_\text{l} \,|\, \norm{x-y}_2 = 2r_o\}\).
This set has at least one element \(\chi^\ast \in \Clem\).
\begin{proposition}
  \label{clm:pimballmin}
  For system~\eqref{eq:main} under Assumption~\ref{ass:pim}, %
  if there exists a pair of distinct pimple bottoms \(x_\text{u} \in \Iu\) and \(x_\text{l} \in \Il\) such that
  \(\mathcal{I}_\text{u}\) and \(\mathcal{I}_\text{l}\) %
  are entirely contained in \(\mathcal{B}_{r_o}(x_o)\),
  then a strict local minimum of \(V_-\) is \(V_-^\ast \coloneqq - 2 r_o^2 \sum_{\{i,j\} \in \E_-} a_{ij}\),
  and the corresponding strict local minimizer is a compact set of polarized configurations \(\Clem\) defined in~\eqref{eq:clem}.
\end{proposition}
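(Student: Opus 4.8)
The plan is to bound $V_-$ from below by $V_-^\ast$ by treating its two sums separately, and then to show that equality pins the configuration down to exactly $\Clem$. First I would split the disagreement function~\eqref{eq:disagree} into the intragroup part $\frac12 \sum_{\{i,j\}\in\E_+} a_{ij}\norm{x_j-x_i}_2^2$, which is manifestly nonnegative, and the intergroup part $-\frac12 \sum_{\{i,j\}\in\E_-} a_{ij}\norm{x_j-x_i}_2^2$. Since Assumption~\ref{ass:pim} keeps every agent inside its pimple and both $\Iu,\Il \subset \ball{r_o}{x_o}$, every intergroup pair consists of two points of the closed ball $\ball{r_o}{x_o}$, so $\norm{x_j-x_i}_2 \le 2r_o$, the diameter. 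Hence the intergroup part is bounded below by $-2r_o^2\sum_{\{i,j\}\in\E_-} a_{ij}$, and adding the two bounds gives $V_-(\chi) \ge V_-^\ast$ throughout the domain $(\Iu)^{M}\times(\Il)^{N-M}$ carved out by the assumption.

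Next I would characterize when both bounds are simultaneously tight. The intragroup bound is attained only if $x_i=x_j$ for every $\{i,j\}\in\E_+$. The intergroup bound is attained only if each $\E_-$ pair realizes the diameter; from $2r_o = \norm{x_i-x_j}_2 \le \norm{x_i-x_o}_2 + \norm{x_o-x_j}_2 \le 2r_o$, equality forces $\norm{x_i-x_o}_2 = \norm{x_j-x_o}_2 = r_o$ together with collinearity through $x_o$, i.e.\ $x_i$ and $x_j$ are antipodal on the sphere $\partial\ball{r_o}{x_o}$. I would note these two optimality conditions do not conflict: a configuration with all of $\Vu$ at one boundary point and all of $\Vl$ at its antipode makes both sums extremal at once, so $V_-^\ast$ is genuinely attained and the bound is sharp (for instance at $(\xu,\xel)$, since $\norm{\xu-\xel}_2 = 2r_o$ by definition of $r_o$, whence $\Clem \neq \varnothing$).

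The crux is to upgrade these pairwise equality conditions to the global statement that the minimizer set equals $\Clem$. Here I would exploit that $\G$ is connected while $\E_-$ consists solely of edges between $\Vu$ and $\Vl$, so every cycle crosses between the groups an even number of times, i.e.\ the $\pm$ coloring is structurally balanced. Reading the equality conditions as ``$\E_+$ identifies endpoints'' and ``$\E_-$ maps endpoints to antipodes'' on $\partial\ball{r_o}{x_o}$, I would propagate along paths: the endpoints of any $\E_-$ edge lie on the boundary sphere, and coincidence/antipodality carry membership in that sphere to every vertex by connectivity. The even-crossing parity then makes the antipodal assignment consistent along every cycle, forcing all of $\Vu$ to a single boundary point $x$ and all of $\Vl$ to its antipode $y$, with $\norm{x-y}_2 = 2r_o$ and $(x,y)\in\Iu\times\Il$; this is precisely the set $Y$, so the equality set is $\Clem$. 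The main obstacle is exactly this propagation: ruling out spurious equality configurations when the subgraphs induced on $\Vu$ and $\Vl$ need not be connected, which is where the structural-balance parity argument does the real work. I would conclude by observing that $\Clem$ is closed and bounded, hence compact, and that for a sufficiently small Hausdorff neighborhood the agents remain within the open pimples, so the global bound applies there and renders $\Clem$ a strict local minimizer attaining the value $V_-^\ast$.
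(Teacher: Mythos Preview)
Your proposal is correct and follows essentially the same route as the paper: drop the nonnegative intragroup sum and bound each intergroup distance by the diameter $2r_o$ of $\ball{r_o}{x_o}$, then read off the equality conditions. Your handling of the equality case is in fact more careful than the paper's---you explicitly invoke connectivity of $\G$ and the even parity of cut crossings to propagate the edge-wise conditions (``$\E_+$ identifies, $\E_-$ sends to antipode'') into the global statement that all of $\Vu$ sit at a single boundary point and all of $\Vl$ at its antipode, a step the paper simply asserts without argument even though the induced subgraphs on $\Vu$ and $\Vl$ need not be connected.
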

\begin{proof}
  For all \(\{i,j\} \in \E_-\), assume without loss of generality that \(i \in \V_\text{u}\) and \(j \in \V_\text{l}\).
  The term \(\norm{x_j - x_i}_2^2\) in~\eqref{eq:disagree} is upper bounded by
  \[
    \norm{x_j - x_i}_2 \leq 2 r_o = \norm{x_\text{u} - x_\text{l}}_2 %
  \]
  The inequality is because both pimples are entirely contained in \(\mathcal{B}_{r_o}(x_o)\).
  The upper bound is achieved when \(x_i = x\) and \(x_j = y\)
  for every pair \((x \in \mathcal{I}_\text{u}, y \in \mathcal{I}_\text{l})\) such that \(\norm{x-y}_2 = 2r_o\),
  an example of which is \(x = x_\text{u}\) and \(y = x_\text{l}\).
  For all \(\{i,j\} \in \E_+\), the reasoning is identical to the corresponding part in the proof of Prop.~\ref{clm:dimhmin}.
  To repeat, we have
  \[
    \norm{x_j - x_i}_2^2 \geq 0,
  \]
  where the equality is achieved when \(x_i = x_j\).
  Therefore,
  \[
    V_-(\chi) \geq - \frac{1}{2} \sum_{\{i,j\} \in \E_-} a_{ij} \norm{x_j - x_i}_2^2 \geq
    - 2 r_o^2 \sum_{\{i,j\} \in \E_-} a_{ij}
  \]
  The minimum is achieved only when \(x_i = x, \forall i \in \V_\text{u}\) and \(x_i = y, \forall i \in \V_\text{l}\)
  for every pair of \((x \in \mathcal{I}_\text{u}, y \in \mathcal{I}_\text{l})\) such that \(\norm{x-y}_2 = 2r_o\).
\end{proof}
\begin{corollary}
  \label{cor:pimballyap}
  For system~\eqref{eq:main} under Assumption~\ref{ass:pim}, if the two pimples \(\mathcal{I}_\text{u}\) and \(\mathcal{I}_\text{l}\)
  satisfy the conditions given in Prop.~\ref{clm:pimballmin},
  then \(\mathcal{C}_\text{lem}\) defined in~\eqref{eq:clem} is a Lyapunov stable set of polarized equilibria.
\end{corollary}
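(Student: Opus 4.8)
The plan is to obtain the corollary, exactly as in the apple case (Corollary~\ref{cor:dimhlyap}), as a direct application of Proposition~\ref{prop:lyapunov}, this time with the disagreement function $V_-$ from~\eqref{eq:disagree} in the role of the objective. I first verify the standing hypotheses of that proposition. The configuration space $(\mathcal{H}^n)^N$ is closed, being a finite product of the closed hypersurface $\mathcal{H}^n$; the function $V_-$ is a finite linear combination of squared Euclidean norms restricted to the $C^2$ manifold $\mathcal{H}^n$, hence is itself $C^2$; and Proposition~\ref{clm:pimballmin} has already shown, under Assumption~\ref{ass:pim} together with the containment of both pimples in $\mathcal{B}_{r_o}(x_o)$, that $\Clem$ is a compact set of strict local minimizers of $V_-$.

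It remains to match the gradient flow. As derived in \S\ref{sec:grad}, the intrinsic descent $\dot{x}_i = -\grad_i V_-(\chi)$ of~\eqref{eq:disagree} is precisely the dynamics~\eqref{eq:main}, so system~\eqref{eq:main} is the flow $\dot{\chi} = -\grad V_-$ to which Proposition~\ref{prop:lyapunov} speaks. Invoking that proposition then certifies $\Clem$ as a Lyapunov stable equilibrium set, the set-valued notion measured by the Hausdorff distance being the appropriate one here. Finally, every $\chi \in \Clem$ is polarized in the sense of Definition~\ref{def:pol}: within a group all states coincide, giving $x_i = x_j$ for $\{i,j\} \in \E_+$, while across groups $\norm{x-y}_2 = 2r_o > 0$ for $(x,y) \in Y$ (recall $r_o > 0$ since $x_\text{u} \neq x_\text{l}$), giving $x_i \neq x_j$ for $\{i,j\} \in \E_-$. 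This establishes the claim.

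I do not expect a genuine obstacle, since the corollary merely repackages Proposition~\ref{clm:pimballmin} into the stability framework of Proposition~\ref{prop:lyapunov}. The one point deserving care, and the main departure from the apple case, is that $\Clem$ is a genuine continuum of equilibria parametrized by $Y$ rather than an isolated point, so the argument must lean on the set-valued formulation of stability and on the compactness of $\Clem$ furnished by Proposition~\ref{clm:pimballmin}. That compactness is plausible because attaining the diameter $2r_o$ of the ball $\mathcal{B}_{r_o}(x_o)$ forces the two clusters onto antipodal points of its bounding sphere, a closed constraint; I would defer to Proposition~\ref{clm:pimballmin} for the formal statement.
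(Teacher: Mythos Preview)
Your proposal is correct and follows essentially the same route as the paper: the paper's proof is the one-line ``direct application of Prop.~\ref{prop:lyapunov} to Prop.~\ref{clm:pimballmin},'' and you have simply unpacked the hypotheses (closedness, $C^2$-smoothness of $V_-$, compact strict local minimizer, identification of~\eqref{eq:main} as the gradient flow of $V_-$) and added the observation that every element of $\Clem$ is polarized in the sense of Definition~\ref{def:pol}.
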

\begin{proof}
  This is a direct application of Prop.~\ref{prop:lyapunov} to Prop.~\ref{clm:pimballmin}.
\end{proof}
\begin{theorem}
  \label{clm:pimballasy}
  For system~\eqref{eq:main} under Assumption~\ref{ass:pim}, if the two pimples \(\mathcal{I}_\text{u}\) and \(\mathcal{I}_\text{l}\)
  satisfy the conditions given in Prop.~\ref{clm:pimballmin},
  and in addition, there is a neighborhood \(\mathcal{N}_\text{a}(\mathcal{C}_\text{lem})\) on \((\mathcal{H}^n)^N\)
  that belongs to an analytic manifold,
  then \(\mathcal{C}_\text{lem}\) defined in~\eqref{eq:clem} is an asymptotically stable set of polarized equilibria.
\end{theorem}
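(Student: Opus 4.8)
The plan is to mirror the argument of Theorem~\ref{clm:dimhasy}, upgrading the Lyapunov stability already established in Corollary~\ref{cor:pimballyap} to asymptotic stability by verifying that \(\Clem\) is \emph{isolated critical} in the sense of Definition~\ref{def:isocri}, after which Prop.~\ref{prop:lyapunov} delivers the conclusion directly. From Prop.~\ref{clm:pimballmin} we already know that \(\Clem\) is a compact strict local minimizer of the analytic function \(V_-\). Moreover, the equilibria of~\eqref{eq:main} are precisely the critical points of \(V_-\), since \(\dot{\chi} = -\grad V_-(\chi)\) and, on a hypersurface, \(\grad V_-(\chi) = \proj(\nabla V_-(\chi))\). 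Hence it suffices to exhibit a neighborhood of \(\Clem\) whose only critical points are those belonging to \(\Clem\) itself.

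The core tool is the same \L{}ojasiewicz-type gradient inequality on analytic Riemannian manifolds~\cite[Sec.~9]{kurdyka00proof} that powered Theorem~\ref{clm:dimhasy}. The genuinely new feature is that \(\Clem\) is a continuum rather than a single point, so I would apply the inequality pointwise along \(\Clem\) and then glue via compactness. Concretely, every \(\bar\chi \in \Clem\) lies in the analytic neighborhood \(\nei_\text{a}(\Clem)\) and attains the common critical value \(V_-^\ast\); the inequality then furnishes an open set \(U_{\bar\chi} \ni \bar\chi\) together with an exponent \(\alpha_{\bar\chi} < 1\) and a constant \(\kappa_{\bar\chi} > 0\) such that \(\lvert V_-(\chi) - V_-^\ast \rvert^{\alpha_{\bar\chi}} \leq \kappa_{\bar\chi} \norm{\grad V_-(\chi)}\) for all \(\chi \in U_{\bar\chi}\). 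Compactness of \(\Clem\) lets me extract a finite subcover \(U_1,\dots,U_m\) and set \(\nei_\text{\l}(\Clem) \coloneqq \bigcup_{k} U_k \subset \nei_\text{a}(\Clem)\).

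The contradiction step then proceeds exactly as in the apple case. Suppose \(\chi \in \nei_\text{\l}(\Clem)\) is an equilibrium with \(\chi \notin \Clem\). It lies in some \(U_k\), where the local inequality yields \(\lvert V_-(\chi) - V_-^\ast \rvert^{\alpha_k} \leq \kappa_k \norm{\grad V_-(\chi)} = 0\), whence \(V_-(\chi) = V_-^\ast\). But Prop.~\ref{clm:pimballmin} asserts that the value \(V_-^\ast\) is attained only on \(\Clem\), forcing \(\chi \in \Clem\) and contradicting the assumption. Thus \(\nei_\text{\l}(\Clem) \setminus \Clem\) contains no equilibria, so \(\Clem\) is isolated critical, and asymptotic stability follows from Prop.~\ref{prop:lyapunov}.

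I expect the main obstacle to be the passage from the pointwise \L{}ojasiewicz inequality to a statement covering the entire equilibrium manifold \(\Clem\). This hinges on two facts: that all points of \(\Clem\) genuinely share the single critical value \(V_-^\ast\), so that a common target \(V_-^\ast\) can sit on the left-hand side of the gradient inequality, and that the finite-subcover gluing is legitimate, which relies crucially on the compactness of \(\Clem\) recorded in Prop.~\ref{clm:pimballmin}. If a single inequality over all of \(\nei_\text{\l}(\Clem)\) is preferred, a uniform exponent \(\alpha \coloneqq \max_k \alpha_k < 1\) and constant \(\kappa \coloneqq \max_k \kappa_k\) can be extracted after shrinking \(\nei_\text{\l}(\Clem)\) so that \(\lvert V_- - V_-^\ast \rvert \leq 1\) holds throughout; the pointwise version, however, already suffices for the equilibrium argument above.
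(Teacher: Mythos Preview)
Your proposal is correct and follows the same strategy as the paper: apply the \L{}ojasiewicz gradient inequality pointwise along \(\Clem\), use compactness of \(\Clem\) to pass from local to uniform isolation of critical points, and conclude via Prop.~\ref{prop:lyapunov}. The only stylistic difference lies in the compactness step: you extract a finite subcover of \(\Clem\) by the local \L{}ojasiewicz neighborhoods, whereas the paper argues sequentially via Bolzano--Weierstrass (supposing a sequence of extraneous equilibria approaches \(\Clem\) and deriving a contradiction with the pointwise isolation); the two devices are interchangeable and deliver the same conclusion.
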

Theorem~\ref{clm:pimballasy} concerns the asymptotic stability of an equilibrium set,
rather than an equilibrium point in Theorem~\ref{clm:dimhasy}.
The accompanying subtlety requires a more involved proof;
a similar line of reasoning appeared in the proof of \cite[Thm.~15]{markdahl21counterexamples}.
\begin{proof}
  The first part of the proof is to show that pointwise,
  there are no equilibrium points other than those in \(\Clem\) in the neighborhood of each \(\chi_\text{p} \in \Clem\).
  The arguments are identical to those in the proof of Theorem~\ref{clm:dimhasy} for the asymptotic stability of \(\chi^\ast\),
  except that we label quantities by a subscript ``p'' to stress that they vary with each different \(\chi_\text{p} \in \Clem\).
  Following a variant~\cite[Sec.~9]{kurdyka00proof} of the \L{}ojasiewicz inequality valid on analytic Riemannian manifolds,
  the analytic function \(V_-(\chi)\) in~\eqref{eq:disagree} behaves in the following way
  in a neighborhood \(\mathcal{N}_\text{\l}(\chi_\text{p}) \subset \nei_\text{a}(\Clem)\)
  of every polarized equilibrium \(\chi_\text{p} \in \mathcal{C}_\text{lem}\):
  \[
    \lvert{V_-(\chi) - V_-(\chi_\text{p})}\rvert^{\alpha_\text{p}} \leq \kappa_\text{p} \norm{\grad V_-(\chi)},
  \]
  for \(\alpha_\text{p} < 1\) and \(\kappa_\text{p} > 0\), and where \(\grad\) is the intrinsic gradient on the tangent space.
  For a Riemannian manifold that is also a hypersurface, \(\grad V_-(\chi) = \proj(\nabla V_-(\chi))\).
  Suppose that \(\chi \in \nei_\text{\l}(\chi_\text{p})\) is an element in the set of equilibria \(\mathcal{Q}\)
  such that \(\mathcal{Q} \cap \mathcal{C}_\text{lem} = \emptyset\),
  then \(\grad V_-(\chi) = \proj(\nabla V_-(\chi)) = 0\), c.f.~\eqref{eq:grad}.
  Consequently, \(V_-(\chi) = V_-(\chi_\text{p})\).
  However, Claim~\ref{clm:pimballmin} says that the local minimum \(V_-^\ast\) is achieved
  only if \(\chi \in \mathcal{C}_\text{lem}\), a contradiction.
  Therefore, \(\mathcal{Q} \cap \mathcal{N}_\text{\l}(\chi_\text{p}) = \emptyset, \forall \chi_p \in \mathcal{C}_\text{lem}\).

  Having shown that every point in \(\mathcal{C}_\text{lem}\) is isolated from \(\mathcal{Q}\),
  we proceed to demonstrate that no sequence in \(\mathcal{Q}\) can approach \(\mathcal{C}_\text{lem}\) arbitrarily close.
  Suppose on the contrary that there is such a sequence \(\{\chi_i\}_{i=1}^\infty \in \mathcal{Q}\),
  then \(\inf_{\chi \in \mathcal{Q}} V_-(\chi) = V_-^\ast\) and \(\lim_{i\rightarrow\infty} V_-(\chi_i) = V_-^\ast\).
  The sequence is bounded, since \(\mathcal{I}_\text{u}\) and \(\mathcal{I}_\text{l}\) are bounded sets.
  By the Bolzano-Weierstrass theorem, the sequence \(\{\chi_i\}_{i=1}^\infty\) has a subsequence
  that converges to some point \(\chi_q\) such that \(V(\chi_q) = V_-^\ast\),
  which in turn implies that \(\chi_q \in \mathcal{C}_\text{lem}\).
  This is to say that this subsequence in \(\mathcal{Q}\) must converge to a point \(\chi_q \in \mathcal{C}_\text{lem}\),
  contradicting the fact that \(\mathcal{Q} \cap \mathcal{N}_\text{\l}(\chi_\text{q}) = \emptyset\).

  Thus, we have shown that \(\mathcal{C}_\text{lem}\) is isolated critical,
  and therefore is asymptotically stable by Prop.~\ref{prop:lyapunov}.
\end{proof}
\begin{remark}
  Theorems~\ref{clm:dimhasy} and \ref{clm:pimballasy} for asymptotic stability rely on an analyticity assumption of the manifold.
  Although a smooth manifold may be topologically equivalent to an analytic one,
  it is worth stressing that our results concern manifolds with special geometric features,
  see Definition~\ref{def:imple}, and are thus not amenable to such topological equivalence.
\end{remark}
\begin{remark}
  The additional requirement on the analyticity of the manifold in Theorems~\ref{clm:dimhasy} and \ref{clm:pimballasy} is a local one.
  We do not require the whole manifold to be analytic
  for \(\mathcal{C}_\text{lem}\) or \(\chi^\ast\) to be asymptotically stable.
  For instance, \(\nei_\text{a}(\Clem)\) may be a subset of \((\Hn)^N \cap \M^N\),
  where \(\Hn\) is the hypermanifold on which the agents inhabit, whereas \(\M\) is an analytic manifold.
\end{remark}
The condition proposed in Prop.~\ref{clm:pimballmin} essentially seeks to ensure that \(2r_o\) is the largest possible distance
between all possible pairs of points \(\{x,y\} \in \Iu \times \Il\) in the neighborhood,
whereby \(V_-^\ast\) is a strict local minimizer of \(V_-\).
However, the conservatism introduced by this approach can be immediately identified.
Even if one pimple, say \(\Iu\), is outside \(\ball{r_o}{x_o}\), if the other pimple \(\Il\) is very ``steep'',
it may still be the case that \(2r_o\) is the largest possible distance
between all possible pairs of points \(\{x,y\} \in \Iu \times \Il\).
Nonetheless, if both pimples are outside \(\ball{r_o}{x_o}\), then instability can be established for \(\chi^\ast\),
as we show in the following.
\begin{proposition}
  \label{clm:pimballins}
  For system~\eqref{eq:main} under Assumption~\ref{ass:pim}, if the two pimples \(\mathcal{I}_\text{u}\) and \(\mathcal{I}_\text{l}\)
  are entirely outside \(\mathcal{B}_{r_o}(x_o)\) except for the bottoms \(\xu\) and \(\xel\),
  then \(\chi^\ast\) defined in~\eqref{eq:capp} is an equilibrium not asymptotically stable.
\end{proposition}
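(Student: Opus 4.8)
The plan is to first confirm that \(\chi^\ast\) is genuinely an equilibrium, and then to rule out asymptotic stability by showing that \(\chi^\ast\) fails to be even a local minimizer of the Lyapunov function \(V_-\): I will exhibit configurations arbitrarily close to \(\chi^\ast\) at which \(V_-\) is strictly below \(V_-(\chi^\ast)\). Since \(\dot V_- \le 0\) along \eqref{eq:main} (by the same LaSalle computation behind Prop.~\ref{prop:isequilib}), any trajectory issuing from such a configuration stays below the level \(V_-(\chi^\ast)\) for all time and hence cannot converge to \(\chi^\ast\). As these configurations accumulate at \(\chi^\ast\), the point is not locally attracting, so it is not asymptotically stable.

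For the equilibrium claim I would exploit the geometry implicit in the hypothesis. Both bottoms lie on \(\partial\ball{r_o}{x_o}\), and since \(\Iu\) meets the closed ball only at \(\xu\), the bottom \(\xu\) minimizes \(\norm{x-x_o}\) over \(\Iu\). The first-order optimality condition \(\proj(\xu)(\xu-x_o)=0\) then forces \(\xu-x_o\), and therefore \(\xu-\xel\), to be parallel to \(\normal(\xu)\); symmetrically \(\xu-\xel \parallel \normal(\xel)\). Substituting into \eqref{eq:main} at \(\chi^\ast\), the net inter-group contribution to each agent is a scalar multiple of \(\xu-\xel\), hence parallel to the local normal and annihilated by \(\proj_i\), so \(\dot x_i = 0\).

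The core, and the hard part, is the geometric construction. I would identify both tangent planes with the common hyperplane \(\normal^\perp\), where \(\normal := (\xu-\xel)/(2r_o)\), and parametrize the pimples near their bottoms as graphs over this hyperplane. For a small horizontal displacement \(\epsilon v\) (with \(v\in\normal^\perp\) a unit vector), let \(x\in\Iu\) be the pimple point with horizontal part \(\epsilon v\) and height \(r_o+\eta_\text{u}\), and let \(y\in\Il\) be the point with horizontal part \(-\epsilon v\) and height \(-r_o+\eta_\text{l}\). Strict exclusion of the ball interior gives \((r_o+\eta_\text{u})^2+\epsilon^2 > r_o^2\) and \((-r_o+\eta_\text{l})^2+\epsilon^2 > r_o^2\), that is \(\eta_\text{u} > -g(\epsilon)\) and \(\eta_\text{l} < g(\epsilon)\) with \(g(\epsilon)=r_o-\sqrt{r_o^2-\epsilon^2}\). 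A direct computation then yields
\[
  \norm{x-y}^2 = \bigl(2r_o+\eta_\text{u}-\eta_\text{l}\bigr)^2 + 4\epsilon^2 ,
\]
and since \(-2g(\epsilon)\) is precisely the larger root of \(D \mapsto (2r_o+D)^2 + 4\epsilon^2 - 4r_o^2\), the strict bound \(\eta_\text{u}-\eta_\text{l} > -2g(\epsilon)\) makes this expression positive, forcing \(\norm{x-y} > 2r_o\). The structural reason I would emphasize is that the two antipodal sphere points sitting at horizontal positions \(\pm\epsilon v\) already realize the diameter \(2r_o\); pushing both surfaces strictly outward raises \(x\) and lowers \(y\) without changing the horizontal offset, so the chord can only lengthen.

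Finally, placing all of \(\Vu\) at \(x\) and all of \(\Vl\) at \(y\) makes every intra-group term in \eqref{eq:disagree} vanish while each inter-group term equals \(\norm{x-y}^2 > 4r_o^2\); hence \(V_-(\chi) = -\tfrac12\norm{x-y}^2\sum_{\{i,j\}\in\E_-}a_{ij} < -2r_o^2\sum_{\{i,j\}\in\E_-}a_{ij} = V_-(\chi^\ast)\). Letting \(\epsilon\to0\) supplies the family of descent configurations accumulating at \(\chi^\ast\), closing the instability argument sketched above. I expect the only delicate point to be justifying the graph parametrization and the height bookkeeping uniformly in \(v\); the quadratic estimate itself is tight yet robust precisely because the sphere is the borderline case separating the stable regime of Prop.~\ref{clm:pimballmin} from the unstable one treated here.
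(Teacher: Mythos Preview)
Your argument is correct and follows the same overall strategy as the paper: verify that \(\chi^\ast\) is an equilibrium, exhibit nearby polarized configurations with strictly smaller \(V_-\), and invoke monotonicity of \(V_-\) along \eqref{eq:main} to preclude return to \(\chi^\ast\). Your treatment of the equilibrium step is in fact more careful than the paper's, which simply asserts that substitution of \eqref{eq:capp} into \eqref{eq:main} works; you supply the missing reason, namely that \(\xu\) minimizes \(\norm{x-x_o}\) over \(\Iu\) so the first-order condition forces \(\xu-x_o\parallel\normal(\xu)\).

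The one genuine difference is in how the descent perturbation is built. The paper picks \(\tilde x_\text{u}\in\Iu\), \(\tilde x_\text{l}\in\Il\) so that the \emph{segment} \(\tilde x_\text{u}\tilde x_\text{l}\) passes through \(x_o\); then \(\norm{\tilde x_\text{u}-\tilde x_\text{l}}=\norm{\tilde x_\text{u}-x_o}+\norm{x_o-\tilde x_\text{l}}>r_o+r_o=2r_o\) is immediate from the outside-the-ball hypothesis. This is slicker but tacitly relies on a transversality fact (the chord direction at \(\chi^\ast\) is normal to both pimples, so nearby lines through \(x_o\) still hit \(\Iu\) and \(\Il\)) that the paper does not spell out. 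Your graph-parametrization with antipodal horizontal offsets \(\pm\epsilon v\) and the quadratic estimate \((2r_o+D)^2+4\epsilon^2>4r_o^2\) for \(D>-2g(\epsilon)\) is more computational, but it is fully explicit and makes transparent why the sphere is the exact borderline between Prop.~\ref{clm:pimballmin} and the present instability. Either construction suffices; yours trades elegance for rigor.
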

\begin{proof}
  That \(\chi^\ast\) is an equilibrium can be checked by substituting \eqref{eq:capp} into \eqref{eq:main}.
  To show asymptotic instability, we find a perturbation that causes the trajectory not to settle into \(\chi^\ast\).
  We choose a perturbation \(\tilde{\chi}\) that leaves the configuration polarized,
  that is \(x_i = \tilde{x}_\text{u} \neq \xu\) for all \(i \in \Vu\) and \(x_i = \tilde{x}_\text{l} \neq \xel\) for all \(i \in \Vl\),
  such that \(\tilde{x}_\text{u} - \tilde{x}_\text{l}\) passes through \(x_o\).
  As both pimples are outside \(\ball{r_o}{x_o}\) except for the isolated points \(\xu\) and \(\xel\),
  \(\norm{\tilde{x}_\text{u} - \tilde{x}_\text{l}}\) is guaranteed to be greater than \(\norm{\xu - \xel}\).
  Consequently, the disagreement function is upperbounded:
  \[
    V_-(\tilde{\chi}) = -\frac{1}{2} \sum_{\{i,j\} \in \E_-} a_{ij} \norm{\tilde{x}_\text{u} - \tilde{x}_\text{l}} <
    -\frac{1}{2} \sum_{\{i,j\} \in \E_-} a_{ij} \norm{\xu-\xel} = V_-(\chi^\ast).
  \]
  Since the disagreement function \(V_-\) decreases along the solutions of \(\dot{\chi} = -\grad V_-(\chi)\) \cite{jost08riemannian},
  which in this case is \eqref{eq:main}, the disturbed configuration \(\tilde{\chi}\) does not return to \(\chi^\ast\).
\end{proof}

\subsection{Sphere: a special case of the lemon}
\label{sec:sph}
In the special case of \(\mathcal{H}^n = \mathcal{S}^n\) when the hypermanifold is the \(n\)-sphere,
for every \(x \in \mathcal{S}^n\) and its opposite pole \(y=-x\),
\(\mathcal{I}_x\) and \(\mathcal{I}_y\) form a pair of pimples satisfying the conditions in Theorem~\ref{clm:pimballasy}.
Therefore, we have for the following subset of \(\Clem\), which is a polarization set specialized on the \(n\)-sphere
\begin{equation}
  \label{eq:csph}
  \begin{split}
    \Csph \coloneqq&
    \{\chi \in (\mathcal{S}^n)^N \,|\, x_i = x\> \forall i \in \Vu, x_i = y\> \forall i \in \Vl, \norm{x-y}_2=2\} \\
    =& \{\chi \in (\mathcal{S}^n)^N \,|\, x_i = x\> \forall i \in \Vu, x_i = -x\> \forall i \in \Vl,  x\in\mathcal{S}^n\}.
  \end{split}
\end{equation}
\begin{corollary}
  \label{cor:sph}
  For system \eqref{eq:main} evolving on \(\mathcal{S}^n\), \(\Csph\) given by \eqref{eq:csph}
  constitutes an asymptotically stable set of polarized equilibria.
\end{corollary}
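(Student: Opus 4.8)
The plan is to recognize Corollary~\ref{cor:sph} as the global incarnation of the lemon results, exploiting that the round sphere is a pimple at every point and is real-analytic. First I would check the geometric hypotheses of Prop.~\ref{clm:pimballmin}. On \(\mathcal{S}^n = \{y : \norm{y}_2 = 1\}\) one has \(\normal(x) = x\), so the height function is \(h_x(y) = \langle x, y\rangle \leq 1\) with equality only at \(y = x\); thus every \(\mathcal{I}_x\) is a pimple with bottom \(x\), and for any antipodal pair the neighborhoods \(\mathcal{I}_x, \mathcal{I}_{-x}\) form a pair of pimples. Taking \(x_\text{u} = x\) and \(x_\text{l} = -x\) gives midpoint \(x_o = 0\) and radius \(r_o = 1\), so \ball{r_o}{x_o} is the closed unit ball \ball{1}{0}, which contains all of \(\mathcal{S}^n\) and hence both pimples. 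This verifies the containment hypothesis of Prop.~\ref{clm:pimballmin}, and since \(\mathcal{S}^n\) is analytic, the extra hypothesis of Theorem~\ref{clm:pimballasy} holds in a neighborhood of every antipodal configuration.

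The crux is to pass from the local patch \(\Clem\) (attached to a single pimple pair) to the full antipodal set \(\Csph\), which ranges over all \(x \in \mathcal{S}^n\). To this end I would show directly that \(\Csph\) is the strict global minimizer of \(V_-\) on \((\mathcal{S}^n)^N\). Bounding each intragroup term below by \(0\) and each intergroup term above by the squared diameter \(\norm{x_j - x_i}_2^2 \leq 4\) of the sphere yields \(V_-(\chi) \geq -2\sum_{\{i,j\}\in\E_-} a_{ij} = V_-^\ast\). Equality forces \(x_i = x_j\) across \(\E_+\) and \(x_i = -x_j\) across \(\E_-\). Introducing the gauge variables \(z_i = x_i\) for \(i \in \Vu\) and \(z_i = -x_i\) for \(i \in \Vl\) (the structural-balance change of variables alluded to in the remark after Prop.~\ref{prop:isequilib}), all these conditions collapse to \(z_i = z_j\) on every edge of \(\G\); connectivity of \(\G\) then forces a common value \(z_i \equiv x\), i.e.\ exactly \(\chi \in \Csph\). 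Hence \(V_- > V_-^\ast\) off \(\Csph\), so the compact set \(\Csph\) is a strict (global, a fortiori local) minimizer.

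With this in hand I would finish by rerunning the isolated-critical argument of Theorem~\ref{clm:pimballasy} verbatim, now with \(\Csph\) in place of \(\Clem\): analyticity of \(V_-\) on \((\mathcal{S}^n)^N\) supplies a \L{}ojasiewicz inequality near each \(\chi_\text{p} \in \Csph\), ruling out foreign critical points pointwise, and the Bolzano--Weierstrass compactness step excludes a sequence of foreign equilibria accumulating on \(\Csph\). Thus \(\Csph\) is isolated critical, and Prop.~\ref{prop:lyapunov} delivers asymptotic stability.

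The main obstacle, and the only step that is not a routine specialization, is precisely this local-to-global upgrade: Theorem~\ref{clm:pimballasy} as stated certifies a single \(\Clem\), whereas \(\Csph\) is a continuum of antipodal configurations parametrized by \(\mathcal{S}^n\). The gauge-variable computation identifying \(\Csph\) as the exact strict global minimizer is what makes the compactness argument applicable to the entire set. Alternatively, one could observe that the dynamics \eqref{eq:main} is \(SO(n+1)\)-equivariant (it involves only differences \(x_j - x_i\) and the projections \(I - \normal_i\normal_i' = I - x_ix_i'\)) and acts transitively on \(\Csph\), so that asymptotic stability of one patch propagates uniformly to all of \(\Csph\).
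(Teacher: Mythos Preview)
Your proposal is correct. The paper offers no proof beyond the sentence preceding the corollary: every antipodal pair on \(\mathcal{S}^n\) furnishes a pair of pimples satisfying the hypotheses of Theorem~\ref{clm:pimballasy}, and the conclusion is declared immediate. You begin from that same verification but then do work the paper skips. Because \(\Clem\) is attached to a single pimple pair while \(\Csph\) ranges over all antipodal configurations, you upgrade the argument by showing directly that \(\Csph\) is the strict \emph{global} minimizer of \(V_-\) on \((\mathcal{S}^n)^N\)---using the diameter bound \(\norm{x_j-x_i}\le 2\), the gauge \(z_i=\pm x_i\), and connectivity of \(\G\)---and then rerun the \L{}ojasiewicz/compactness isolation argument of Theorem~\ref{clm:pimballasy} on \(\Csph\) itself. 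This step (or your \(SO(n+1)\)-equivariance alternative, which anticipates the coordinate change used in the proof of Theorem~\ref{thm:sph}) is exactly what makes the local-to-global passage rigorous; the paper's aside that \(\Csph\) is ``a subset of \(\Clem\)'' is in fact backwards (one has \(\Clem\subset\Csph\) for any fixed pimple pair), so your extra care here is well placed.
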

Moreover, for \(n\)-spheres excepting the circle \(\mathcal{S}^1\),
a stronger result of almost global asymptotic stability can be obtained by exploiting the spherical symmetry.
By almost global asymptotic stability, we mean
\begin{definition}[almost global asymptotic stability]
  \label{def:mea0}
  A set of equilibria \(\mathcal{D} \subset (\Hn)^N\) is almost globally attractive if for all initial conditions
  except a measure-zero subset, it holds that \(\lim_{t\rightarrow\infty} \chi(t) \in \mathcal{D}\).
\end{definition}
The measure-zero set in Definition~\ref{def:mea0} is with respect to the Lebesgue measure.
For a precise definition on the \(n\)-sphere, we refer to \cite[Def.~4]{markdahl18almost}.
\begin{theorem}
  \label{thm:sph}
  For system \eqref{eq:main} evolving on \(\mathcal{S}^n\) with \(n > 1\),
  the polarization set \(\Csph\) given by \eqref{eq:csph} is almost globally asymptotically stable.
  Moreover, every trajectory converges to some point in \(\Csph\) at a locally exponential rate.
\end{theorem}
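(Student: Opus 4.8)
The plan is to exploit the antipodal symmetry of the sphere via a gauge transformation that converts the antagonistic dynamics \eqref{eq:main} into the purely attractive synchronization dynamics \eqref{eq:mainpp}, and then to invoke a known almost-global synchronization result on $\mathcal{S}^n$. First I would introduce the change of variables $z_i = x_i$ for $i \in \Vu$ and $z_i = -x_i$ for $i \in \Vl$. Since the antipodal map $x \mapsto -x$ is a volume-preserving isometry of $\mathcal{S}^n$, the map $\chi \mapsto \zeta \coloneqq (z_i)_{i=1}^N$ is a diffeomorphism of $(\mathcal{S}^n)^N$ onto itself that preserves the Lebesgue (Riemannian volume) measure. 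The key geometric fact is that on the unit sphere the outward normal coincides with the position vector, $\normal_i = x_i = \pm z_i$, so $\proj_i = I - z_i z_i'$ and, crucially, $\proj_i z_i = z_i - z_i (z_i' z_i) = 0$.

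Next I would substitute the change of variables into \eqref{eq:main}. For $i \in \Vu$, writing $x_j = z_j$ for $j \in \Vu$ and $x_j = -z_j$ for $j \in \Vl$, the intragroup term reproduces $a_{ij}(z_j - z_i)$ while the intergroup term $-\sum_{j \in \Vl} a_{ij}(x_j - x_i)$ becomes $\sum_{j \in \Vl} a_{ij}(z_j + z_i)$; for $i \in \Vl$ an analogous computation applies to $\dot{z}_i = -\dot{x}_i$. In both cases every term proportional to $z_i$ is annihilated by $\proj_i z_i = 0$, so one may freely replace $z_j + z_i$ by $z_j - z_i$, and the system collapses exactly to
\[
  \dot{z}_i = \proj_i \sum_{j \in \V} a_{ij}(z_j - z_i), \quad \forall i \in \V,
\]
which is precisely \eqref{eq:mainpp} on the original connected graph. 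Under the same transformation the polarization set $\Csph$ of \eqref{eq:csph} maps bijectively onto the consensus manifold $\{\zeta \in (\mathcal{S}^n)^N \,|\, z_i = z_j \ \forall i,j\}$, since $z_i = -x_i = x$ for $i \in \Vl$ whenever $x_i = -x$.

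I would then invoke the almost-global synchronization theorem for the gradient flow \eqref{eq:mainpp} on $\mathcal{S}^n$ with $n > 1$ over a connected graph \cite{markdahl18almost,markdahl21almost}: the consensus manifold is almost globally asymptotically stable, its complement basin being a measure-zero union of unstable equilibrium manifolds, and convergence is locally exponential. Pulling this back through the measure-preserving diffeomorphism of the first step transfers both the almost-global attractivity (measure zero is preserved) and the local exponential rate to $\Csph$ for the original system \eqref{eq:main}; local asymptotic stability of $\Csph$ itself is already supplied by Corollary~\ref{cor:sph}.

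The main obstacle is twofold. The algebraic heart, namely verifying that the gauge transformation sends \eqref{eq:main} to \eqref{eq:mainpp}, hinges entirely on the identity $\proj_i z_i = 0$, which is special to the sphere where the normal equals the position vector; this is exactly why the construction fails on the general hypermanifolds of \S\ref{sec:hni}, consistent with the Remark following the graph setup. The deeper obstacle is the synchronization result itself, which genuinely requires $n > 1$: on the circle $\mathcal{S}^1$ stable non-consensus equilibria (twisted or splay states) occupy a positive-measure basin, so almost-global convergence provably fails there, which matches both the hypothesis $n > 1$ and the fact that Corollary~\ref{cor:sph} yields only local stability in the $\mathcal{S}^1$ case.
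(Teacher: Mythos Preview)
Your proposal is correct and follows essentially the same approach as the paper: apply the gauge transformation \(z_i = x_i\) for \(i \in \Vu\), \(z_i = -x_i\) for \(i \in \Vl\), use \(\normal_i = x_i\) and \(\proj_i z_i = 0\) on the sphere to reduce \eqref{eq:main} to the consensus system \eqref{eq:mainpp}, and invoke \cite[Thm.~13]{markdahl18almost}. Your write-up is in fact slightly more careful than the paper's own proof in making explicit the measure-preservation of the antipodal diffeomorphism (needed to transport the ``almost global'' conclusion) and in spelling out why the \(n>1\) hypothesis is essential.
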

\begin{proof}
  Apply a coordinate transformation \(y_i = x_i\) for \(i \in \Vu\) and \(y_i = -x_i\) for \(i \in \Vl\).
  System \eqref{eq:main} becomes
  \begin{equation}
    \label{eq:sn}
    \dot{y}_i = (I - y_i y_i') \sum_{j\in\V} a_{ij} y_j, \quad \forall i \in \V,
  \end{equation}
  where we have used the facts that \(\normal_i = x_i\) and \((I - x_i x_i')x_i = 0\) that are valid on the unit \(n\)-spheres.
  System \eqref{eq:sn} is a special case (constant \(a_{ij}\)) of a family of consensus protocols
  considered in \cite[Thm.~13]{markdahl18almost}, which guarantees the almost global asymptotic stability of the consensus set
  \[
    \mathcal{C} = \{(y_i)_{i=1}^N \in (\mathcal{S}^n)^N \,|\, y_i = y_j, \forall i,j \in \V\}.
  \]
  This consensus set \(\mathcal{C}\) maps to the polarization set \(\Csph\) by reversing the bijective coordinate transform.
  Therefore, applying \cite[Thm.~13]{markdahl18almost} to system \eqref{eq:sn} and then reversing the coordinate transformation,
  we obtain the first conclusion.
  For the second statement, notice that the adjacency matrix \(A\) is constant and nonnegative.
  Then, \cite[Thm.~13]{markdahl18almost} again applies.
\end{proof}
Thus, Theorem \ref{thm:sph} together with Corollary \ref{cor:sph} complements and generalizes several existing results in the literature.
For instance, \cite[Thm.~3.4]{ha20emergence} provides regions of attraction with exponential stability
under a condition on relative strengths of attractive and repulsive gains, assuming all-to-all networks.
The almost glocal asymptotic stability result in \cite[Thm.~4.5]{song17intrinsic}
is only established on the 2-sphere, and is restricted to cycle graphs with even number of agents.
In comparison, we are able to conclude without extra conditions that the polarized equilibria set is
almost globally asymptotically stable for \(\mathcal{S}^n\) with \(n>1\),
and locally asymptotically stable for \(\mathcal{S}^n\) for all \(n\in\mathbb{N}\).

\section{Donuts}
\label{sec:donuts}
Observe the apparent symmetry of the two scenarios considered in Section~\ref{sec:fruits},
namely, a pair of dimples with attractive intergroup coupling vs.\ a pair of pimples with repulsive intergroup coupling.
It might come as a little surprise that
the conditions for stability do not exhibit similar symmetry in Props.~\ref{clm:dimhmin} and \ref{clm:pimballmin}.
The hidden symmetry is, however, revealed by additional geometric possibilities illustrated in Fig.~\ref{fig:doughty}.
As we shall see, these scenarios can indeed be addressed using the same conditions discovered in Section~\ref{sec:fruits}.
\begin{figure}
  \centering
  \includegraphics[width=.3\linewidth]{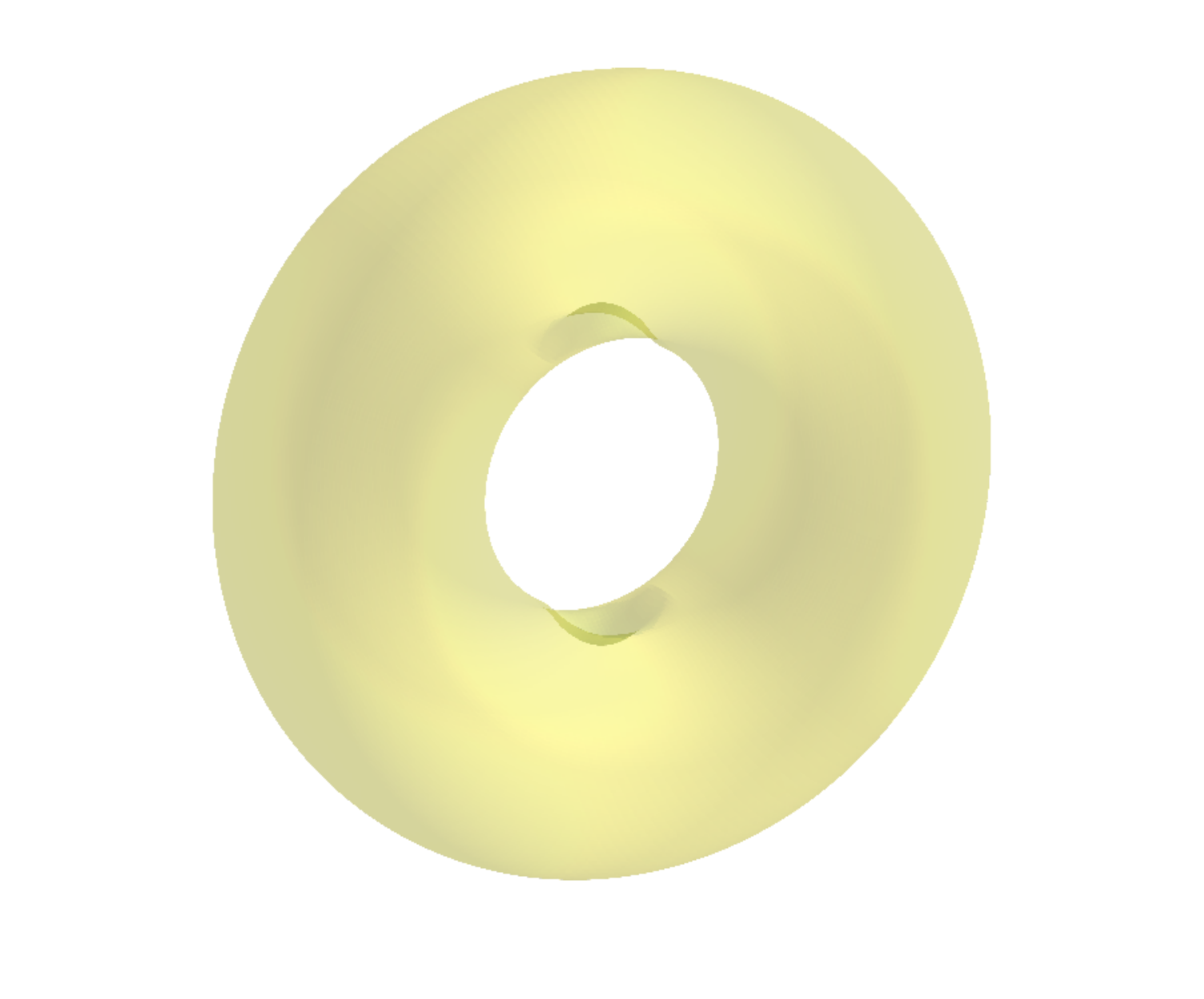}
  \includegraphics[width=.3\textwidth]{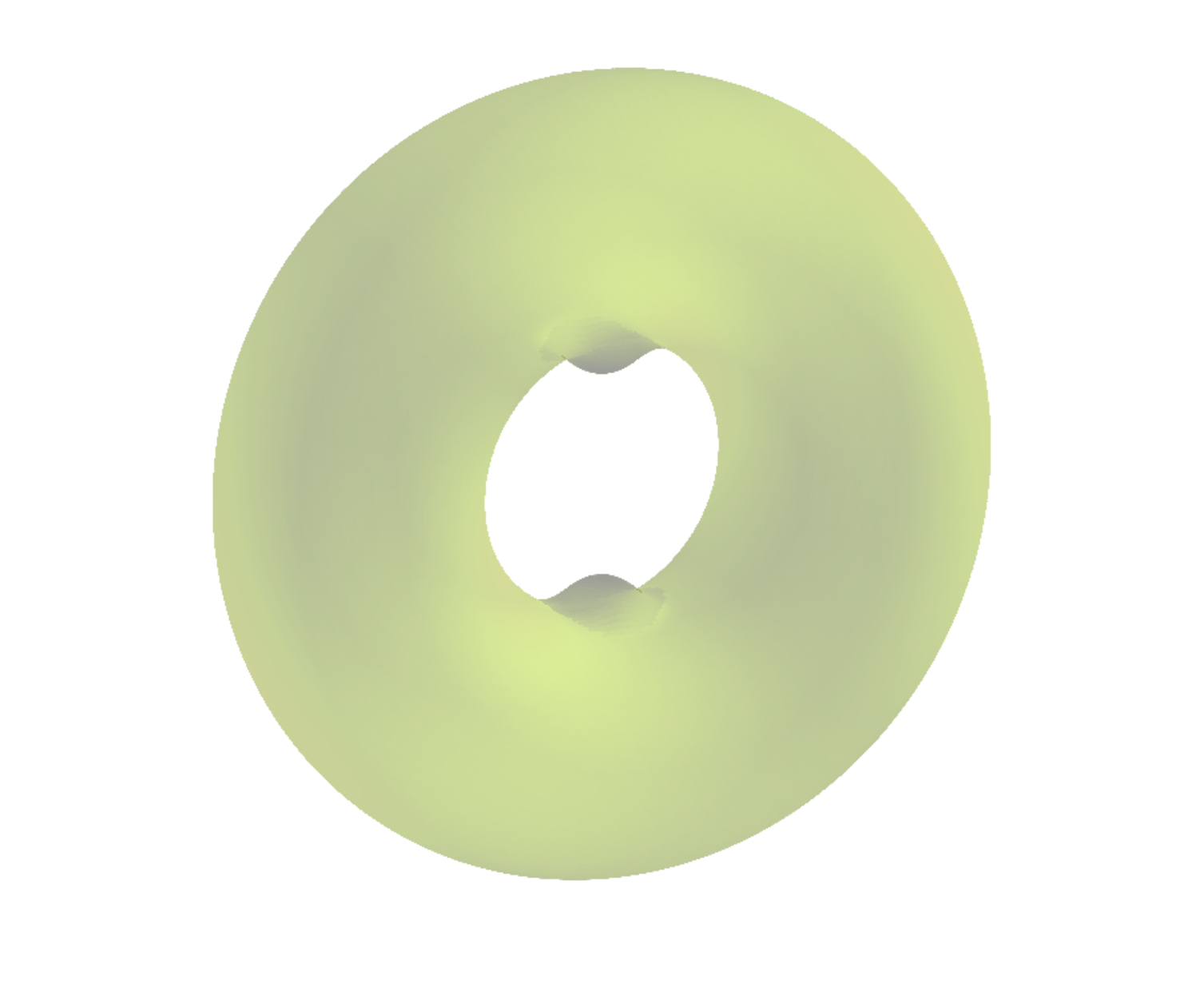}
  \includegraphics[width=.3\linewidth]{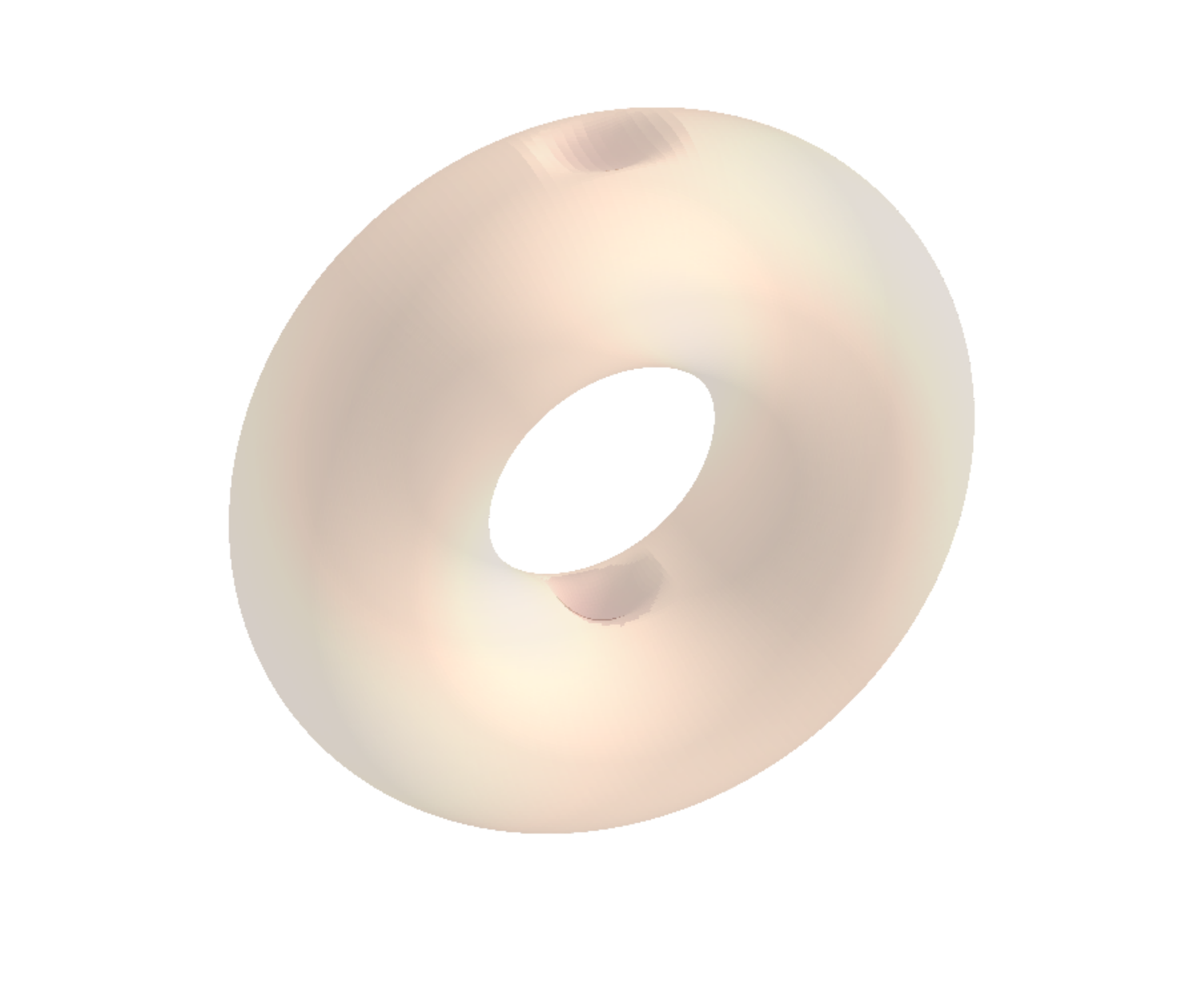}
  \caption{A lemon donut with a pair of dimples (left), an apple donut with a pair of pimples (middle),
    and a peach donut with one dimple and one pimple (right).}
  \label{fig:doughty}
\end{figure}

Stable polarization with a pair of dimples also exists in the system~\eqref{eq:main} with repulsive intergroup coupling,
if the normals of the two dimples point toward each other, see the lemon donut in Fig.~\ref{fig:doughty}.
The stability conditions in the following result is identical to that in Corollary~\ref{cor:pimballyap} and Theorem~\ref{clm:pimballasy}.
\begin{proposition}
  \label{clm:dimball}
  For system~\eqref{eq:main} under Assumption~\ref{ass:dim},
  if there exists a pair of distinct dimple bottoms \(x_\text{u} \in \Iu\) and \(x_\text{l} \in \Il\) such that
  \(\mathcal{I}_\text{u}\) and \(\mathcal{I}_\text{l}\) are entirely contained in \(\mathcal{B}_{r_o}(x_o)\),
  then \(\mathcal{C}_\text{lem}\) defined in~\eqref{eq:clem} is Lyapunov stable.
  Furthermore, if there is a neighborhood \(\nei_\text{a}(\mathcal{C}_\text{lem})\) on \((\mathcal{H}^n)^N\)
  that belongs to an analytic manifold, then \(\mathcal{C}_\text{lem}\) is asymptotically stable.
\end{proposition}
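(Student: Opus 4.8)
The plan is to observe that neither Proposition~\ref{clm:pimballmin} nor Theorem~\ref{clm:pimballasy} actually exploits the \emph{pimple} structure of \(\Iu\) and \(\Il\): the only geometric input those proofs use is the containment \(\Iu, \Il \subset \ball{r_o}{x_o}\), which is assumed here verbatim. Since the dynamics~\eqref{eq:main} and the disagreement function \(V_-\) in~\eqref{eq:disagree} are likewise unchanged, I expect the entire chain of reasoning to transfer mutatis mutandis, with the dimple/pimple distinction playing no role in the Euclidean distance bounds. This is precisely the ``hidden symmetry'' anticipated in the surrounding text.

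First I would re-establish that \(\Clem\) is a compact strict local minimizer of \(V_-\). For each intergroup edge \(\{i,j\} \in \E_-\) (say \(i \in \Vu\), \(j \in \Vl\)), membership of both points in \(\ball{r_o}{x_o}\) forces \(\norm{x_j - x_i}_2 \le 2 r_o\), with equality exactly when \((x_i, x_j) \in Y\); for each intragroup edge \(\{i,j\} \in \E_+\) the trivial bound \(\norm{x_j - x_i}_2^2 \ge 0\) holds, with equality iff \(x_i = x_j\). Summing as in Proposition~\ref{clm:pimballmin} yields \(V_-(\chi) \ge -2 r_o^2 \sum_{\{i,j\}\in\E_-} a_{ij} = V_-^\ast\), attained precisely on \(\Clem\). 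The one point to verify carefully is that the bottoms \(x_\text{u}, x_\text{l}\) enter solely through \(x_o\), \(r_o\), and the containment hypothesis, and not through the concavity or convexity of the height function at the bottom; inspecting the proof of Proposition~\ref{clm:pimballmin} confirms this, so the dimple case is covered by an identical computation. Lyapunov stability of \(\Clem\) then follows by applying Proposition~\ref{prop:lyapunov}, exactly as in Corollary~\ref{cor:pimballyap}.

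For the asymptotic statement, under the added analyticity hypothesis I would invoke the same two-step \L{}ojasiewicz argument used in the proof of Theorem~\ref{clm:pimballasy}. The pointwise step combines the \L{}ojasiewicz inequality on the analytic neighborhood \(\nei_\text{a}(\Clem)\) with the strict-minimizer property just re-derived to rule out any equilibrium \(\chi \notin \Clem\) in a small neighborhood of each \(\chi_\text{p} \in \Clem\); the second step uses boundedness of \(\Iu, \Il\) and the Bolzano--Weierstrass theorem to exclude a sequence of outside equilibria accumulating on \(\Clem\). Together these render \(\Clem\) isolated critical, whence Proposition~\ref{prop:lyapunov} delivers asymptotic stability.

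I do not anticipate a genuine technical obstacle here: the proof is essentially a transcription of the lemon case once one notes that the dimple/pimple character is immaterial to the distance bounds. The only place demanding care is confirming that \(\Clem\) remains nonempty and compact in the dimple setting --- nonemptiness because \((x_\text{u}, x_\text{l}) \in Y\) by construction, and compactness as in the lemon case --- so that Proposition~\ref{prop:lyapunov} and the accumulation argument apply without modification.
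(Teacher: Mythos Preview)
Your proposal is correct and matches the paper's own proof exactly: the paper simply states that the argument is identical to that of Corollary~\ref{cor:pimballyap} and Theorem~\ref{clm:pimballasy}, which is precisely the observation you make and justify. Your additional remark that the dimple/pimple distinction is immaterial to the ball-containment distance bound is the key point, and it is the reason the paper can dispense with a separate proof.
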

\begin{proof}
  Identical to the proof of Corollary~\ref{cor:pimballyap} and Theorem~\ref{clm:pimballasy}.
\end{proof}
\begin{remark}
  The conditions in Prop.~\ref{clm:dimhmin} and~\ref{clm:dimball} do not intersect.
\end{remark}
For the apple donut in Fig.~\ref{fig:doughty} with attractive intergroup coupling,
we have the following result analogous to that for the apple,
although the condition here has a switched inequality (compare to Prop.~\ref{clm:dimhmin}).
\begin{proposition}
  \label{clm:pimh}
  For system~\eqref{eq:mainpp} under Assumption~\ref{ass:pim},
  if there exists a pair of distinct pimple bottoms
  \(x_\text{u} \in \mathcal{I}_\text{u}\) and \(x_\text{l} \in \mathcal{I}_\text{l}\) such that
  \(x_\text{u} - x_\text{l}\) is parallel to \(\normal(x_\text{u})\),
  and \(h_\text{u}(x_\text{l})\) is a local minimum satisfying \(h_\text{u}(x_\text{l}) > h_\text{u}(x_\text{u})\),
  then \(\chi^\ast\) defined in~\eqref{eq:capp} is Lyapunov stable.
  Furthermore, if there is a neighborhood \(\nei_\text{a}(\chi^\ast)\) on \((\mathcal{H}^n)^N\) that belongs to an analytic manifold,
  then \(\chi^\ast\) is asymptotically stable.
\end{proposition}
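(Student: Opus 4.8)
The plan is to mirror the proof of Proposition~\ref{clm:dimhmin}, establishing that \(\chi^\ast\) is a strict local minimizer of \(V_+\), and then to reuse the \L{}ojasiewicz argument of Theorem~\ref{clm:dimhasy}; the only change is that every height inequality is reversed to suit the pimple geometry. First I would fix an intergroup edge \(\{i,j\}\in\E_-\) with \(i\in\Vu\) and \(j\in\Vl\), and project the separation onto \(\normal(\xu)\) to obtain \(\norm{x_j-x_i}_2^2 \geq (h_\text{u}(x_j)-h_\text{u}(x_i))^2\), with equality precisely when \(x_j-x_i\) is parallel to \(\normal(\xu)\). Since \(\Iu\) is a pimple, Definition~\ref{def:imple} makes \(\xu\) a strict local maximizer of \(h_\text{u}\), so \(h_\text{u}(x_i)\leq h_\text{u}(\xu)\) with equality only at \(x_i=\xu\); and the hypothesis that \(h_\text{u}(\xel)\) is a local minimum of \(h_\text{u}\) on \(\Il\) with \(h_\text{u}(\xel)>h_\text{u}(\xu)\) gives \(h_\text{u}(x_j)\geq h_\text{u}(\xel)\). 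Chaining these, \(h_\text{u}(x_j)-h_\text{u}(x_i)\geq h_\text{u}(\xel)-h_\text{u}(\xu)>0\), so \(\norm{x_j-x_i}_2^2\geq (h_\text{u}(\xel)-h_\text{u}(\xu))^2\).

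For the intragroup edges \(\{i,j\}\in\E_+\) I keep the trivial bound \(\norm{x_j-x_i}_2^2\geq 0\), with equality iff \(x_i=x_j\). Summing both families against the positive weights of~\eqref{eq:disagreepp} yields \(V_+(\chi)\geq \tfrac12(h_\text{u}(\xel)-h_\text{u}(\xu))^2\sum_{\{i,j\}\in\E_-}a_{ij}\), attained at \(\chi^\ast\). Once I argue that attainment is exclusive to \(\chi^\ast\), strictness of the minimizer is established, and Proposition~\ref{prop:lyapunov} immediately delivers Lyapunov stability. Under the added local analyticity hypothesis, the \L{}ojasiewicz inequality of~\cite{kurdyka00proof} combined with this strictness rules out any nearby equilibrium other than \(\chi^\ast\), rendering it isolated critical, so Proposition~\ref{prop:lyapunov} again yields asymptotic stability --- exactly the reasoning of Theorem~\ref{clm:dimhasy}.

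The main obstacle is the equality analysis, namely verifying that both lower bounds are met \emph{only} at \(\chi^\ast\), given that the local minimum hypothesis on \(h_\text{u}(\xel)\) need not be strict. Equality in the projection bound forces \(x_j-x_i\) parallel to \(\normal(\xu)\); equality in the height bound forces \(h_\text{u}(x_i)=h_\text{u}(\xu)\) and \(h_\text{u}(x_j)=h_\text{u}(\xel)\). The strict maximality of the pimple pins \(x_i=\xu\), after which the parallelism condition \(\xu-\xel\parallel\normal(\xu)\), together with \(x_j-\xu\parallel\normal(\xu)\) and the matched height \(h_\text{u}(x_j)=h_\text{u}(\xel)\), pins \(x_j=\xel\) uniquely even though \(\xel\) need not be the unique height minimizer. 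This is the single place where the parallelism hypothesis does essential work, and the step I would write out most carefully.
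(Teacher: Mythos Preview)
Your proposal is correct and follows precisely the route the paper intends: the paper's own proof reads ``Identical to the proof of Corollary~\ref{cor:dimhlyap} and Theorem~\ref{clm:dimhasy},'' and you have spelled out exactly that adaptation, reversing the height inequalities to match the pimple geometry and then invoking Proposition~\ref{prop:lyapunov} and the \L{}ojasiewicz argument. Your equality analysis---using the parallelism hypothesis to pin \(x_j=\xel\) once \(x_i=\xu\) is fixed---is in fact more explicit than the paper's treatment of the analogous step in Proposition~\ref{clm:dimhmin}.
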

\begin{proof}
  Identical to the proof of Corollary~\ref{cor:dimhlyap} and Theorem \ref{clm:dimhasy}.
\end{proof}
The results in §\ref{sec:fruits} and \ref{sec:donuts} have an immediate extension to products of finitely many spaces.
We briefly describe the formalism and provide an opinion dynamics interpretation.
Let \(\mathcal{H}\) be a product of \(m\) closed and orientable hypermanifolds
\(\mathcal{H} = \mathcal{H}^{n_1} \times \mathcal{H}^{n_2} \times \dotsm \times \mathcal{H}^{n_m}\).
Dimples and pimples are features considered as belonging to each \(\mathcal{H}^{n_i}\).
Every agent has \(m\) positions, each independent from the remaining \(m-1\) positions,
and co-evolve with the positions of other agents on the same hypermanifold.
Thus, \(m\) disagreement functions are simultaneously and independently minimized.
The graphs of the agent networks and the signs of the intergroup couplings need not be the same on each hypermanifold.
We do require that there exists a structurally balanced partition of the \(N\) agents
that is admitted by all the graphs \(\G_1, \G_2,\dotso, \G_m\),
so that the same two groups consistently satisfy either Assumption~\ref{ass:dim} or \ref{ass:pim} across all the hypermanifolds.
\begin{proposition}
  If for each hypermanifold \(\mathcal{H}^{n_i}\), the conditions of one of
  Theorem~\ref{clm:dimhasy}, \ref{clm:pimballasy}, \ref{clm:dimball}, or \ref{clm:pimh} are satisfied,
  then the polarized equilibria set \(\{\chi^\ast, \Clem\}^m\) over the product space \(\mathcal{H}\) is asymptotically stable.
\end{proposition}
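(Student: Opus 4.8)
The plan is to exploit the fact that the dynamics on \(\mathcal{H}\) decouple across the \(m\) factors, so that asymptotic stability of the product equilibrium set follows by combining the strict-local-minimizer and isolated-critical properties already furnished by the per-factor results. First I would write the collective state as \(\chi = (\chi^{(1)}, \dots, \chi^{(m)})\), where \(\chi^{(k)} \in (\mathcal{H}^{n_k})^N\) collects the \(k\)-th positions of all agents. Because every agent's position on \(\mathcal{H}^{n_k}\) evolves only under the disagreement function and graph attached to that factor, the total disagreement function splits as \(V(\chi) = \sum_{k=1}^m V^{(k)}(\chi^{(k)})\), where each \(V^{(k)}\) is the appropriate \(V_+\) or \(V_-\). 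The projected gradient correspondingly decomposes, \(\grad V(\chi) = (\grad V^{(1)}(\chi^{(1)}), \dots, \grad V^{(m)}(\chi^{(m)}))\), so the gradient flow \(\dot{\chi} = -\grad V(\chi)\) is precisely the product of the \(m\) independent flows analyzed in Sections~\ref{sec:fruits}--\ref{sec:donuts}. The target equilibrium set is the product \(\mathcal{S} = \mathcal{S}^{(1)} \times \dots \times \mathcal{S}^{(m)}\), where each \(\mathcal{S}^{(k)}\) is \(\chi^\ast\) or \(\Clem\) as dictated by which of Theorems~\ref{clm:dimhasy}, \ref{clm:pimballasy} or Propositions~\ref{clm:dimball}, \ref{clm:pimh} applies on factor \(k\).

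Next I would invoke Prop.~\ref{prop:lyapunov}, which reduces asymptotic stability to two ingredients: \(\mathcal{S}\) is a strict local minimizer of \(V\), and \(\mathcal{S}\) is isolated critical. Each per-factor theorem already establishes both for its \(\mathcal{S}^{(k)}\) inside \(V^{(k)}\). For the strict-minimizer property I would take \(\epsilon = \min_k \epsilon_k\), shrinking by \(\sqrt{m}\) to account for the product metric, so that on the product \(\epsilon\)-neighborhood every coordinate \(\chi^{(k)}\) lies in the corresponding factor neighborhood; since each \(V^{(k)}\) strictly exceeds its minimum off \(\mathcal{S}^{(k)}\) and attains it on \(\mathcal{S}^{(k)}\), the sum \(V\) strictly exceeds \(V|_{\mathcal{S}}\) at any point of the neighborhood outside \(\mathcal{S}\). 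For the isolated-critical property I would use the gradient decomposition: a critical point of \(V\) in the product neighborhood has every \(\grad V^{(k)}(\chi^{(k)}) = 0\), hence each \(\chi^{(k)}\) is critical for \(V^{(k)}\) within the factor neighborhood; by the isolated-critical conclusion of each per-factor result, this forces \(\chi^{(k)} \in \mathcal{S}^{(k)}\) for all \(k\), i.e. \(\chi \in \mathcal{S}\). Thus the product neighborhood is void of critical points outside \(\mathcal{S}\), matching Definition~\ref{def:isocri}.

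I expect the main obstacle to be handling the set-valued factors \(\Clem\) uniformly with the point-valued factors \(\chi^\ast\) under the Hausdorff distance on the product, in particular verifying that closeness in the product Hausdorff metric implies per-factor closeness so that the factor neighborhoods can be composed. This is essentially the Bolzano-Weierstrass compactness argument already carried out in the proof of Theorem~\ref{clm:pimballasy}, now applied coordinatewise; once the product \(\mathcal{S}\) is confirmed compact, being a finite product of compact sets, and the neighborhood bookkeeping is settled, Prop.~\ref{prop:lyapunov} delivers the claim.
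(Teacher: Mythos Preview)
The paper states this proposition without proof, describing it merely as a ``conceptually simple extension'' and immediately moving on to interpretation. Your proposal is a correct and natural way to supply the omitted argument: the decoupling \(V = \sum_k V^{(k)}\) and \(\grad V = (\grad V^{(1)}, \dots, \grad V^{(m)})\) is exactly what the paper's setup guarantees, and reducing to Prop.~\ref{prop:lyapunov} via the factorwise strict-minimizer and isolated-critical properties is the intended route. The neighborhood and compactness bookkeeping you flag is routine once the product structure is in place, so there is no genuine obstacle.
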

This conceptually simple extension accommodates the need for modeling polarization over multiple topics.
Each hypermanifold may represent a distinct issue, over which individuals take their positions.
The independent evolution of dynamics over each hypermanifold addresses the observation mentioned in the opening of this article,
that often the same polarization is formed even though the multiple topics in contention lacks apparent connections.
For a simple example, we may use \(\mathcal{S}^1 \times \mathcal{S}^2\) (the Cartesian product of a unit circle and a 2-sphere)
to model a group of people trying to determine the time and location of their meeting.

\section{Numerical examples}
\label{sec:sim}
We illustrate our main results with simple numerical examples,
and explore other possible routes to polarization not covered by the main results of Section~\ref{sec:fruits}.
\subsection{Illustrative}
\label{sec:simillu}
We demonstrate the stable polarization processes on manifolds with a pair of pimples,
the first one a lemon and the second one a sphere.
There are \(N=7\) agents divided into two groups \(\Vu = \{1,2,3\}\) and \(\Vl=\{4,5,6,7\}\).
We test two network topologies with uniform weights over all edges: the first one is connected but is otherwise arbitrarily chosen;
the second one is a bipartite graph, as portrayed in Fig.~\ref{fig:A7bip}.
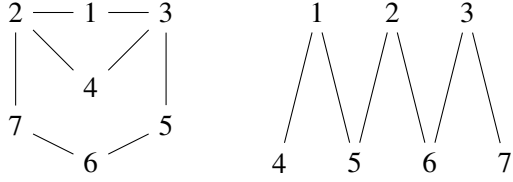
\begin{figure}
  \centering
  \begin{tikzpicture}
  \node (v1) at (0,0) {1};
  \node (v2) at (-1,0) {2};
  \node (v3) at (1,0) {3};
  \node (v4) at (0,-1) {4};
  \node (v5) at (1,-1.5) {5};
  \node (v6) at (0,-2) {6};
  \node (v7) at (-1,-1.5) {7};

  \draw (v1) -- (v2);
  \draw (v1) -- (v3);
  \draw (v2) -- (v4);
  \draw (v2) -- (v7);
  \draw (v3) -- (v4);
  \draw (v3) -- (v5);
  \draw (v5) -- (v6);
  \draw (v6) -- (v7);
\end{tikzpicture} \hspace{2em}
  \begin{tikzpicture}
  \node (v1) at (0,1) {1};
  \node (v2) at (1,1) {2};
  \node (v3) at (2,1) {3};
  \node (v4) at (-0.5,-1) {4};
  \node (v5) at (0.5,-1) {5};
  \node (v6) at (1.5,-1) {6};
  \node (v7) at (2.5,-1) {7};

  \draw (v1) -- (v4);
  \draw (v1) -- (v5);
  \draw (v2) -- (v5);
  \draw (v2) -- (v6);
  \draw (v3) -- (v6);
  \draw (v3) -- (v7);
\end{tikzpicture}
  \caption{Arbitrary and bipartite graphs used in simulations of §\ref{sec:simillu}.}
  \label{fig:A7bip}
\end{figure}
The agents update their beliefs according to the rules specified in \eqref{eq:main} with repulsive intergroup interactions.
They are randomly initialized within a pair of pimples;
in the case of the sphere, they are randomly initialized on the entire 2-sphere,
since every pair of antipodal points can serve as the pimple bottoms,
and the radii \(\epsilon\) of their neighborhoods may be very large (\(<2\)).
These specifications conform to the conditions in Theorem~\ref{clm:pimballasy},
and the trajectories depicted in Fig.~\ref{fig:simlem} converge to polarized equilibria as expected.
The accompanying convergence rates are shown in Fig.~\ref{fig:crlem}.
The log scale plots show a largely linear trend, which is evidence of exponential convergence.
\begin{figure}
  \centering
  \includegraphics[width=.3\linewidth]{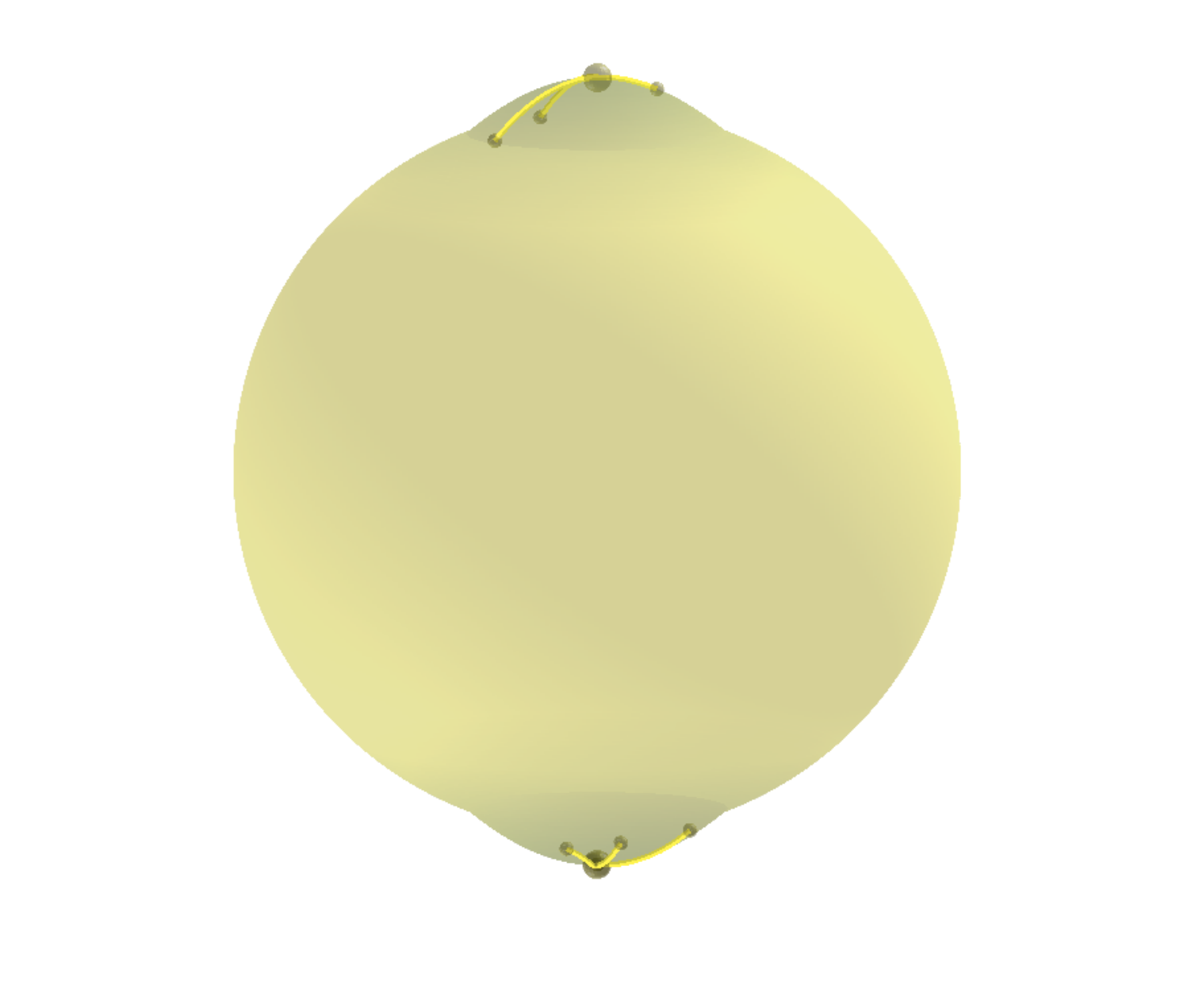}
  \includegraphics[width=.3\textwidth]{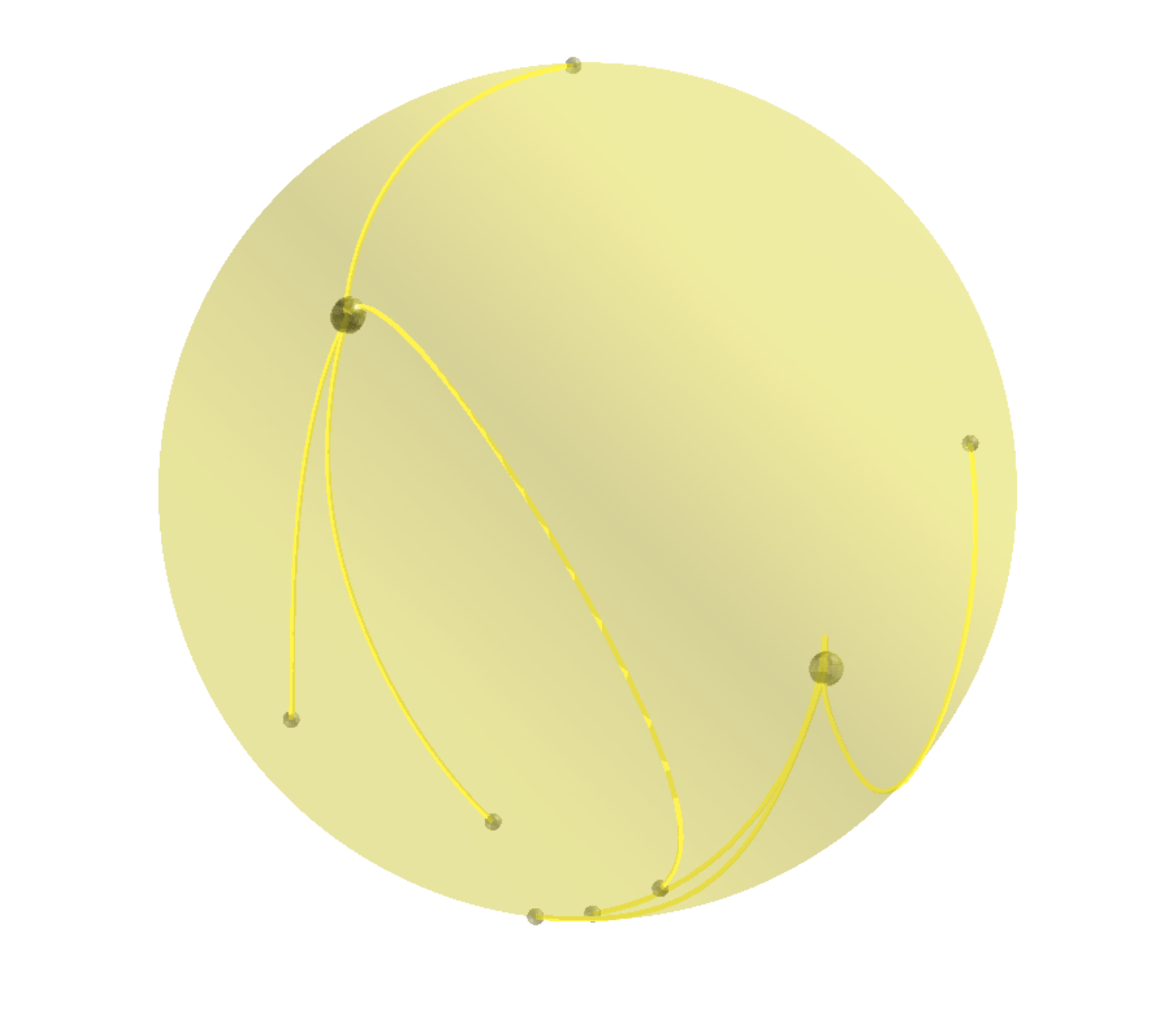}
  \includegraphics[width=.3\linewidth]{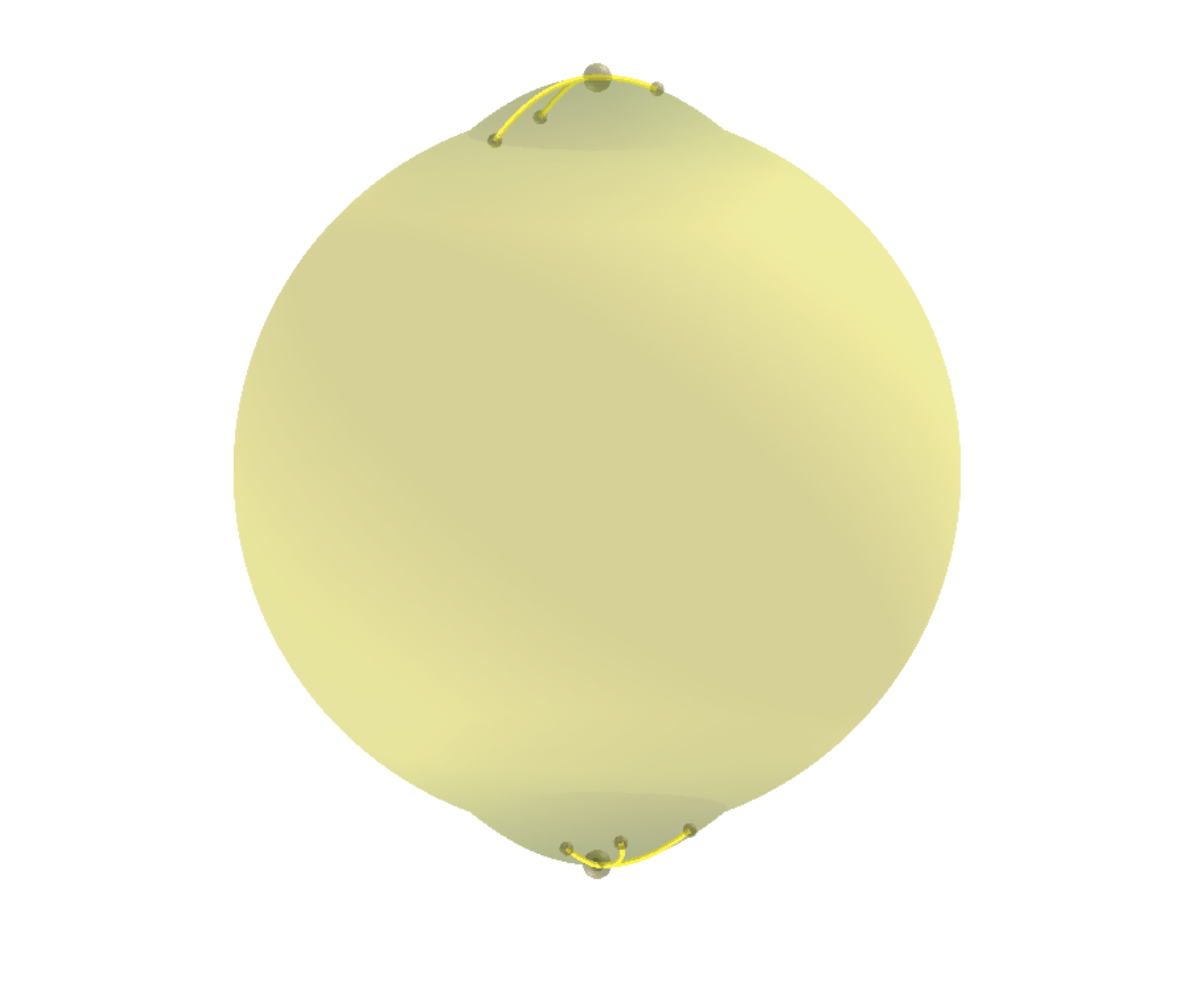}
  \caption{Two groups of agents governed by \eqref{eq:main} form stable polarization within a pair of pimples
    on a lemon with network \(A_1\) (left), on a sphere with network \(A_1\) (middle), and on a lemon with network \(A_2\) (right).
    Initial and final states are indicated by small and large dots on the yellow trajectories.}
  \label{fig:simlem}
\end{figure}
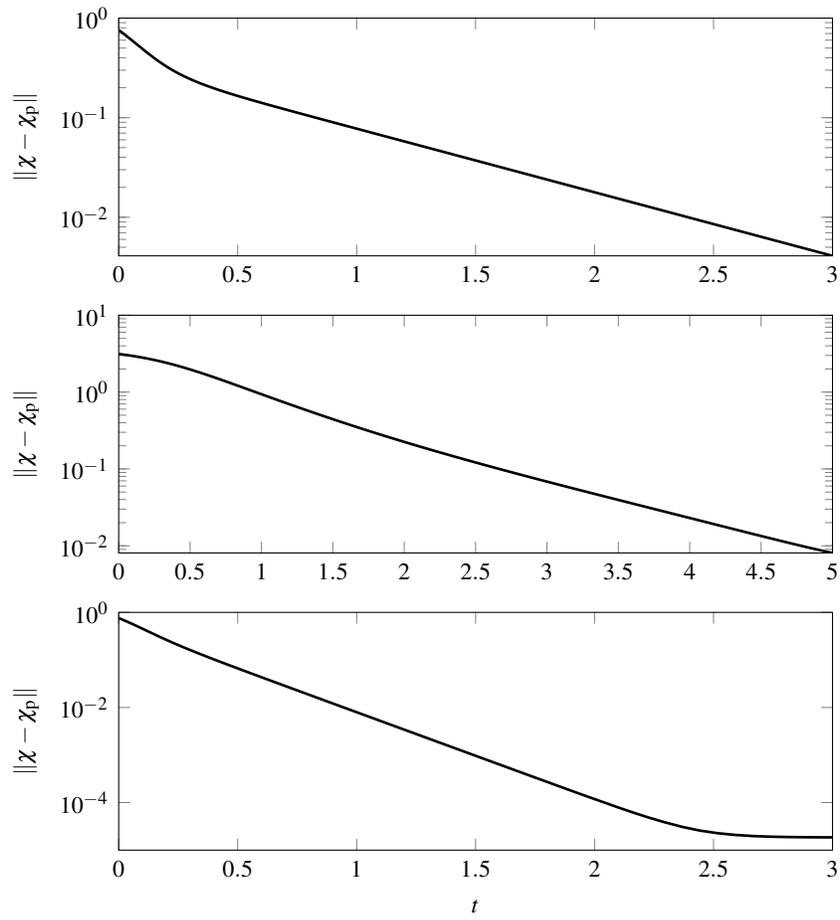
\begin{figure}
  \centering
%
%
\begin{tikzpicture}

\begin{axis}[%
width=\wdfig,
height=\htfig,
at={(0,0)},
scale only axis,
xmin=0,
xmax=3,
ymode=log,
ymin=0.00409670198003255,
ymax=1,
yminorticks=true,
ylabel={\(\norm{\chi-\chi_\text{p}}\)},
axis background/.style={fill=white},
legend style={legend cell align=left,align=left,draw=white!15!black}
]
\addplot [color=black,solid,line width=1.0pt,forget plot]
  table[row sep=crcr]{%
0	0.758525320465015\\
0.000239618078674857	0.757788324811665\\
0.000479236157349714	0.7570516464647\\
0.000718854236024571	0.75631528646141\\
0.000958472314699428	0.75557924583973\\
0.00215656270807371	0.751903869825559\\
0.003354653101448	0.748236634400206\\
0.00455274349482228	0.744577669975944\\
0.00575083388819657	0.740927107301949\\
0.011741285855068	0.722804908529813\\
0.0177317378219394	0.704912523084339\\
0.0237221897888108	0.687266475758278\\
0.0297126417556823	0.669883250932808\\
0.0501158345661635	0.612861806070449\\
0.0705190273766448	0.559691147489906\\
0.0909222201871261	0.510855759914913\\
0.111325412997607	0.466671774915488\\
0.133763812828145	0.423597551168016\\
0.156202212658682	0.386140056769004\\
0.178640612489219	0.353904728999233\\
0.201079012319757	0.326324034613018\\
0.219065959821212	0.307134292177412\\
0.237052907322667	0.290187864536915\\
0.255039854824122	0.275176660088343\\
0.273026802325577	0.261817735816661\\
0.291148237875958	0.249775287029649\\
0.309269673426339	0.238939929688067\\
0.327391108976721	0.229127710524411\\
0.345512544527102	0.220182014106032\\
0.36535719615964	0.211225549027817\\
0.385201847792178	0.20301397105701\\
0.405046499424717	0.195433950871135\\
0.424891151057255	0.18839181314451\\
0.443718207180245	0.182138577363397\\
0.462545263303236	0.176248614188073\\
0.481372319426226	0.170678169899984\\
0.500199375549217	0.165390468844024\\
0.516389568721772	0.161045940066404\\
0.532579761894328	0.156871457309705\\
0.548769955066883	0.15285284857767\\
0.564960148239438	0.148977771689533\\
0.584944883156775	0.144376571938163\\
0.604929618074111	0.139960597746595\\
0.624914352991448	0.135715169017044\\
0.644899087908784	0.131627627680836\\
0.666853994888868	0.127306187271088\\
0.688808901868952	0.123149862503533\\
0.710763808849035	0.119147987468862\\
0.732718715829119	0.115291224008396\\
0.75696339418977	0.111191014551118\\
0.781208072550421	0.10724821733398\\
0.805452750911071	0.103454341202947\\
0.829697429271722	0.099801778250888\\
0.857234073973947	0.0958160249680187\\
0.884770718676171	0.0919948212866668\\
0.912307363378395	0.0883300486013958\\
0.93984400808062	0.0848142558178794\\
0.968538393138173	0.0813017640478102\\
0.997232778195726	0.0779364776192202\\
1.02592716325328	0.0747117443330269\\
1.05462154831083	0.0716213145179538\\
1.08640034246142	0.0683483292859584\\
1.118179136612	0.0652253995217551\\
1.14995793076259	0.0622454727973861\\
1.18173672491318	0.0594018623542455\\
1.21947303549613	0.0561934849823432\\
1.25720934607909	0.0531584419800725\\
1.29494565666205	0.0502873354642139\\
1.33268196724501	0.047571265506156\\
1.37802327007244	0.0445010393016872\\
1.42336457289988	0.041628918986848\\
1.46870587572731	0.0389421637167485\\
1.51404717855475	0.0364287879846294\\
1.57021059611366	0.0335391574216832\\
1.62637401367257	0.0308787783943166\\
1.68253743123147	0.0284295539314177\\
1.73870084879038	0.0261746560592412\\
1.80706450021027	0.0236697689266371\\
1.87542815163015	0.0214047539804788\\
1.94379180305003	0.0193567850488297\\
2.01215545446992	0.017504893177086\\
2.08715545446992	0.0156762726178252\\
2.16215545446992	0.0140388521718482\\
2.23715545446992	0.0125727831817184\\
2.31215545446992	0.0112599390088225\\
2.38715545446992	0.0100840811054723\\
2.46215545446992	0.00903113469978841\\
2.53715545446992	0.00808833990627652\\
2.61215545446992	0.00724404571346931\\
2.68715545446992	0.00648781631609302\\
2.76215545446992	0.00581060580350262\\
2.83715545446992	0.00520421366791364\\
2.91215545446992	0.00466115229335978\\
2.93411659085244	0.00451313211904001\\
2.95607772723496	0.0043698159413848\\
2.97803886361748	0.00423105420853865\\
3	0.00409670198003255\\
};
\end{axis}
\end{tikzpicture}%
%
%
\begin{tikzpicture}

\begin{axis}[%
width=\wdfig,
height=\htfig,
at={(0,0)},
scale only axis,
xmin=0,
xmax=5,
ymode=log,
ymin=0.00806430785357835,
ymax=10,
yminorticks=true,
ylabel={\(\norm{\chi-\chi_\text{p}}\)},
axis background/.style={fill=white},
legend style={legend cell align=left,align=left,draw=white!15!black}
]
\addplot [color=black,solid,line width=1.0pt,forget plot]
  table[row sep=crcr]{%
0	3.13250645771525\\
0.000971058661964564	3.13090445238298\\
0.00194211732392913	3.12929910965622\\
0.00291317598589369	3.12769042879154\\
0.00388423464785825	3.12607840908806\\
0.00873952795768107	3.11796820648661\\
0.0135948212675039	3.10977445232412\\
0.0184501145773267	3.10149710791332\\
0.0233054078871495	3.09313616196293\\
0.0475818744362636	3.05007898009214\\
0.0718583409853777	3.0049461714737\\
0.0961348075344918	2.95776930716672\\
0.120411274083606	2.90859819623333\\
0.163100361429059	2.81750805788757\\
0.205789448774512	2.7209786144858\\
0.248478536119966	2.61965900267916\\
0.291167623465419	2.51432343622756\\
0.36659626471757	2.3210155269495\\
0.442024905969721	2.1235162396765\\
0.517453547221873	1.92715168754697\\
0.592882188474024	1.73687564965643\\
0.666324178600555	1.5617980991244\\
0.739766168727086	1.39940194016272\\
0.813208158853617	1.25108410044406\\
0.886650148980148	1.11719024073373\\
0.948816331061938	1.01482406147391\\
1.01098251314373	0.921923152676758\\
1.07314869522552	0.837851242655463\\
1.13531487730731	0.761900886110289\\
1.19243311567722	0.698640480048031\\
1.24955135404713	0.64108605940387\\
1.30666959241704	0.588736988006708\\
1.36378783078695	0.541116550286895\\
1.43617063644257	0.486865371804088\\
1.5085534420982	0.438694762535425\\
1.58093624775382	0.395890964331348\\
1.65331905340944	0.357804543581468\\
1.7398264101121	0.317671856758807\\
1.82633376681477	0.282641665118515\\
1.91284112351743	0.252007003271387\\
1.9993484802201	0.22514299779881\\
2.11176234678161	0.194995064926716\\
2.22417621334313	0.16940259807909\\
2.33659007990465	0.147605118445531\\
2.44900394646616	0.128944543113217\\
2.57400394646616	0.11123103962756\\
2.69900394646616	0.0962009196615773\\
2.82400394646616	0.0834020008746172\\
2.94900394646616	0.0724450900737741\\
3.07400394646616	0.0630183731925096\\
3.19900394646616	0.0548921724133105\\
3.32400394646616	0.0478683683954455\\
3.44900394646616	0.0417759631828272\\
3.57400394646616	0.0364749412977195\\
3.69900394646616	0.0318574025221995\\
3.82400394646616	0.0278300684192498\\
3.94900394646616	0.0243120575691241\\
4.07400394646616	0.0212360968893974\\
4.19900394646616	0.0185488785806071\\
4.32400394646616	0.016204436771369\\
4.44900394646616	0.0141631206882099\\
4.57400394646616	0.0123920553210154\\
4.69900394646616	0.010865192673855\\
4.82400394646616	0.00956045659464718\\
4.94900394646616	0.0084590118998401\\
4.96175295984962	0.00835744831008442\\
4.97450197323308	0.00825781992126827\\
4.98725098661654	0.00816011147513322\\
5	0.00806430785357835\\
};
\end{axis}
\end{tikzpicture}%
%
%
\begin{tikzpicture}

\begin{axis}[%
width=\wdfig,
height=\htfig,
at={(0,0)},
scale only axis,
xmin=0,
xmax=3,
xlabel={\(t\)},
ymode=log,
ymin=1e-05,
ymax=1,
yminorticks=true,
ylabel={\(\norm{\chi-\chi_\text{p}}\)},
axis background/.style={fill=white},
legend style={legend cell align=left,align=left,draw=white!15!black}
]
\addplot [color=black,solid,line width=1.0pt,forget plot]
  table[row sep=crcr]{%
0	0.758525320465015\\
0.000130323219630468	0.75805828695457\\
0.000260646439260936	0.757591400732997\\
0.000390969658891405	0.757124661951268\\
0.000521292878521873	0.756658070760271\\
0.00117290897667421	0.754327333940856\\
0.00182452507482655	0.752000309471457\\
0.0024761411729789	0.749677016111329\\
0.00312775727113124	0.747357472562102\\
0.00638583776189294	0.735816651912009\\
0.00964391825265465	0.724372347767032\\
0.0129019987434164	0.713026826258475\\
0.0161600792341781	0.701782309679864\\
0.0303359824613469	0.654084987030477\\
0.0445118856885157	0.608508046476582\\
0.0586877889156846	0.565194191727271\\
0.0728636921428534	0.524257395972713\\
0.0898524250016442	0.47844307589212\\
0.106841157860435	0.436231048404655\\
0.123829890719226	0.397607903253176\\
0.140818623578017	0.36247778213493\\
0.158695645081329	0.329099391691409\\
0.176572666584642	0.299147007847167\\
0.194449688087955	0.272328510897249\\
0.212326709591268	0.248329026256952\\
0.228562446355325	0.228708037089706\\
0.244798183119382	0.210932848531513\\
0.261033919883438	0.194802579958327\\
0.277269656647495	0.180132729858521\\
0.296764179343898	0.164215465045359\\
0.316258702040301	0.149925868341556\\
0.335753224736703	0.137058784151131\\
0.355247747433106	0.125435455622798\\
0.373983812590127	0.115295198519799\\
0.392719877747148	0.10605462212542\\
0.411455942904169	0.0976183732675568\\
0.43019200806119	0.0899024451466855\\
0.448783269417218	0.0828862868398394\\
0.467374530773245	0.0764476459685439\\
0.485965792129272	0.0705331010212843\\
0.5045570534853	0.0650946551611643\\
0.521445918238281	0.0605309910305792\\
0.538334782991262	0.0562970693755736\\
0.555223647744244	0.0523673174926306\\
0.572112512497225	0.0487182620682343\\
0.590898676182196	0.0449629988988704\\
0.609684839867167	0.0415022591119339\\
0.628471003552137	0.0383120259432288\\
0.647257167237108	0.0353702092501643\\
0.667165198143136	0.0325014293737007\\
0.687073229049164	0.0298675944993897\\
0.706981259955191	0.0274490860090985\\
0.726889290861219	0.0252278198778674\\
0.746791048063524	0.0231879105224259\\
0.766692805265829	0.0213138509409503\\
0.786594562468135	0.0195920345955603\\
0.80649631967044	0.0180098782337726\\
0.829652352544156	0.0163295273702288\\
0.852808385417873	0.0148064485356073\\
0.87596441829159	0.0134259034331172\\
0.899120451165307	0.0121743639750996\\
0.925260115980707	0.0109012566956916\\
0.951399780796108	0.0097615203023321\\
0.977539445611509	0.00874124191418159\\
1.00367911042691	0.00782773038649389\\
1.0336302248529	0.00689765561064363\\
1.06358133927888	0.0060782130974539\\
1.09353245370487	0.00535636766403798\\
1.12348356813086	0.00472031024321959\\
1.15869732870628	0.0040681395212301\\
1.1939110892817	0.00350615332182382\\
1.22912484985712	0.00302208973381999\\
1.26433861043254	0.00260490609652565\\
1.30648555309284	0.00218016162699696\\
1.34863249575315	0.0018247422915665\\
1.39077943841345	0.00152766393518305\\
1.43292638107375	0.00127902651901487\\
1.47971029705198	0.00104967760830677\\
1.52649421303021	0.000861513109626706\\
1.57327812900845	0.000707437638506247\\
1.62006204498668	0.00058100844133551\\
1.66758535375543	0.000475426384045847\\
1.71510866252418	0.000389096018899215\\
1.76263197129294	0.000318664432326512\\
1.81015528006169	0.000261082685929126\\
1.86577529210791	0.000206642840610517\\
1.92139530415414	0.000163711771886991\\
1.97701531620036	0.000130056689004195\\
2.03263532824659	0.00010358503382981\\
2.09730096717741	7.96348719526875e-05\\
2.16196660610823	6.17044213307867e-05\\
2.22663224503905	4.85549816445399e-05\\
2.29129788396988	3.89468508384881e-05\\
2.36629788396988	3.09925983242643e-05\\
2.44129788396988	2.583071167199e-05\\
2.51629788396987	2.27132415039118e-05\\
2.59129788396988	2.08652049149542e-05\\
2.66629788396988	1.9760717480115e-05\\
2.74129788396987	1.91527538848692e-05\\
2.81629788396987	1.88314043967121e-05\\
2.89129788396987	1.86581377927362e-05\\
2.91847341297741	1.86160534453381e-05\\
2.94564894198494	1.85825281897576e-05\\
2.97282447099247	1.85558352632729e-05\\
3	1.85345854598361e-05\\
};
\end{axis}
\end{tikzpicture}%
  \caption{Convergence rates in log scale corresponding to the cases in Fig.~\ref{fig:simlem} from left to right.}
  \label{fig:crlem}
\end{figure}

The examples on the lemon in Fig.~\ref{fig:simlem} left and right visually demonstrate
that our polarization model, unlike the mainstream models available in the literature as pointed out in the introduction,
is able to capture the phenomenon of radicalization, where opinions become more extreme.

\subsection{Peach}
\label{sec:peach}
There is one fundamental fruit from Fig.~\ref{fig:fruity} that is left out of the discussion so far,
namely the peach with a pimple and a dimple.
It is tricky to analyze the polarization conditions for the gradient flow dynamics on a peach,
therefore we only provide numerical examples to show the possibility.
Figure~\ref{fig:simpea} depicts the same two groups of agents with adjacency matrix \(A_1\) as in §\ref{sec:simillu}.
However, the system is not governed by \eqref{eq:main} or \eqref{eq:mainpp}.
Instead, the agents in \(\Vu\) are set to be repulsed by those in \(\Vl\),
whereas the agents in \(\Vl\) are attracted to those in \(\Vu\).
Thus, the polarized equilibrium we numerically found is not covered by our theoretical results,
and we do not make any claim on its stability.
In fact, there is no guarantee that systems with such asymmetric intergroup interactions converge to an equilibrium,
as shown in Fig.~\ref{fig:simHairy}.
\begin{figure}
  \centering
  \includegraphics[width=.3\textwidth]{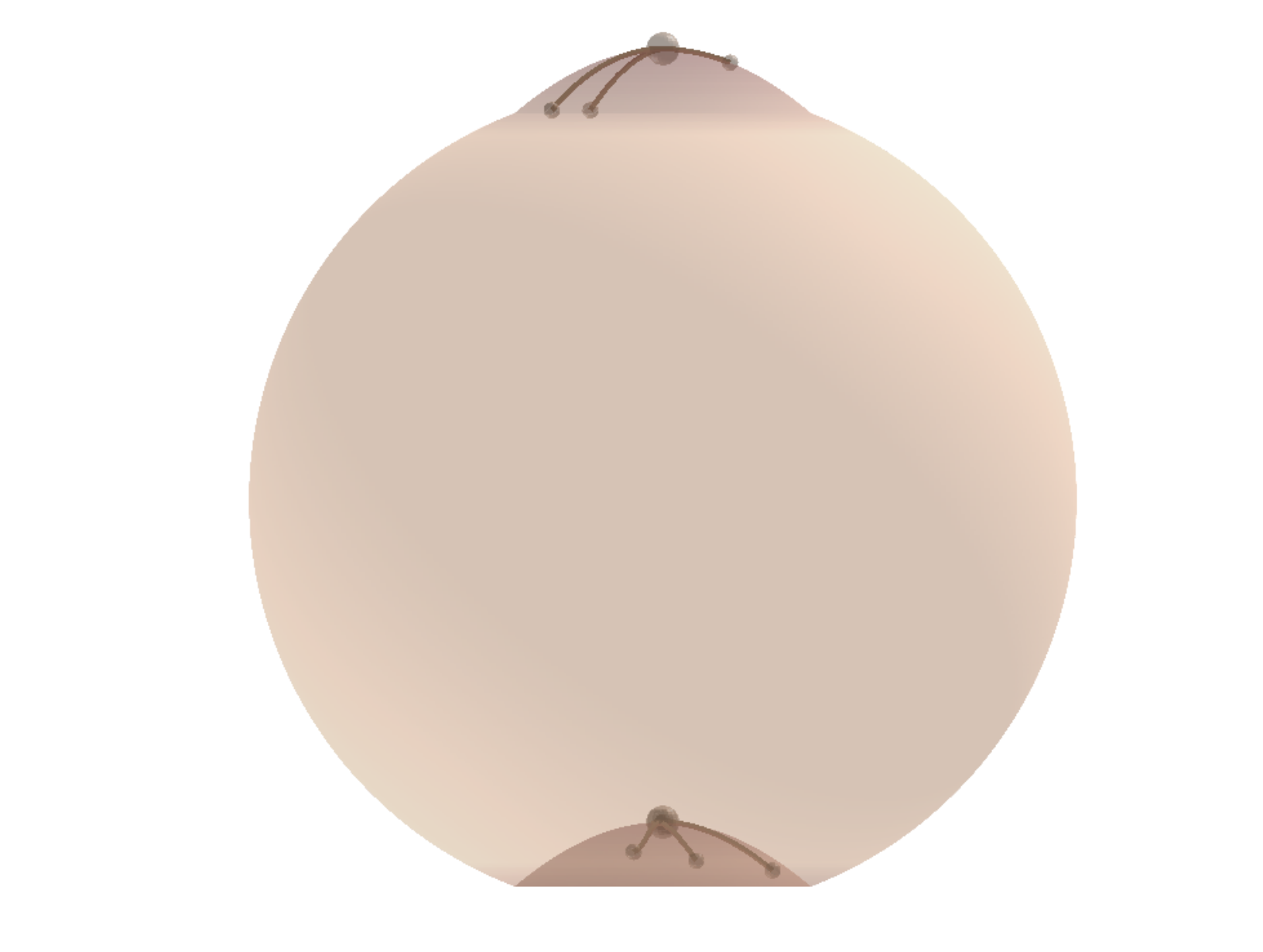}
  \caption{Two groups of agents converge to a polarized equilibrium within a pair of pimple and dimple on a peach with network \(A_1\).
    The intragroup couplings are attractive.
    The upper group is repulsed by the lower group, whereas the lower group is attracted to the upper group.
    Initial and final states are indicated by small and large dots on the pink trajectories.}
  \label{fig:simpea}
\end{figure}
\begin{figure}
  \centering
  \includegraphics[width=.33\textwidth]{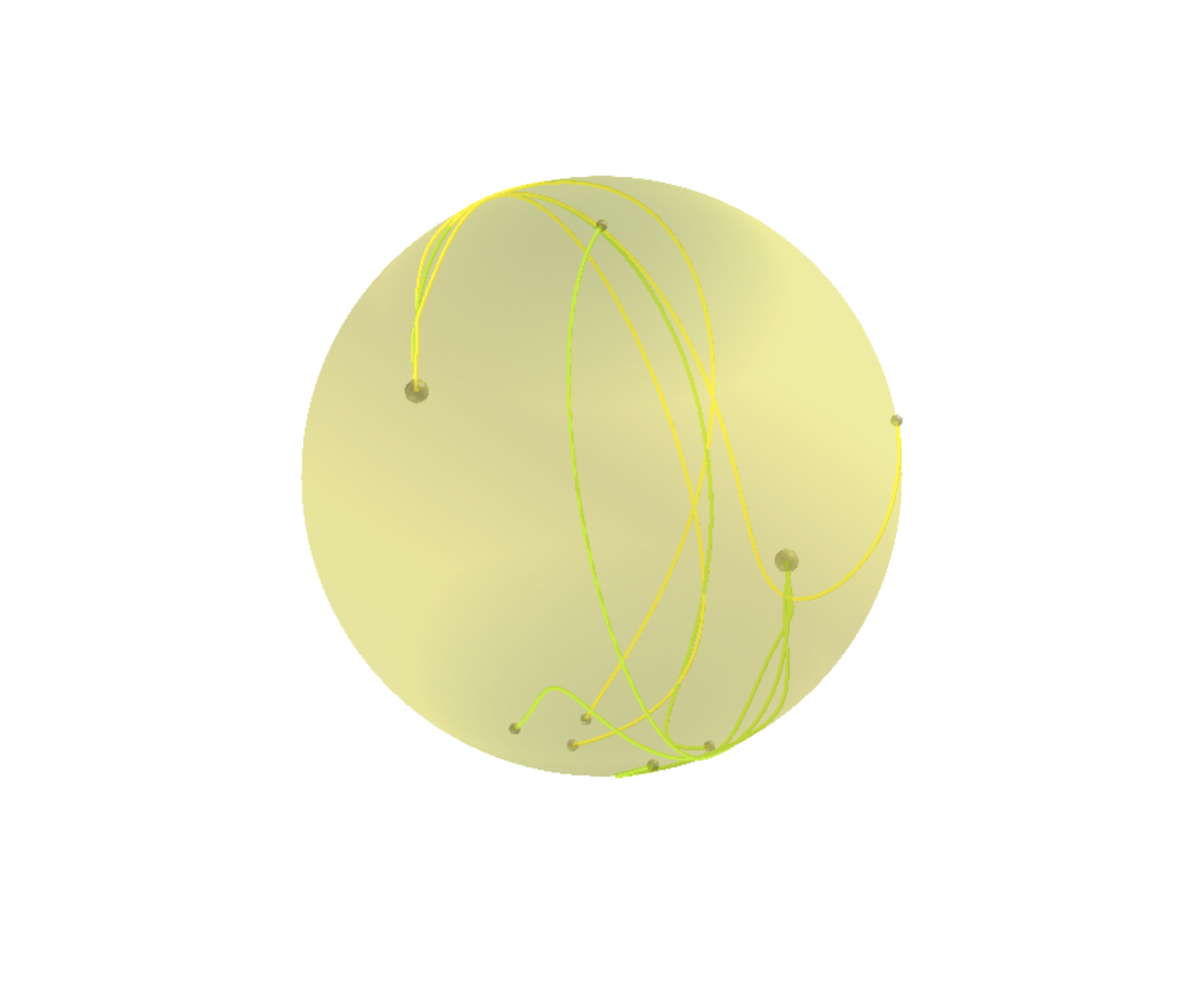}
  \includegraphics[width=.33\linewidth]{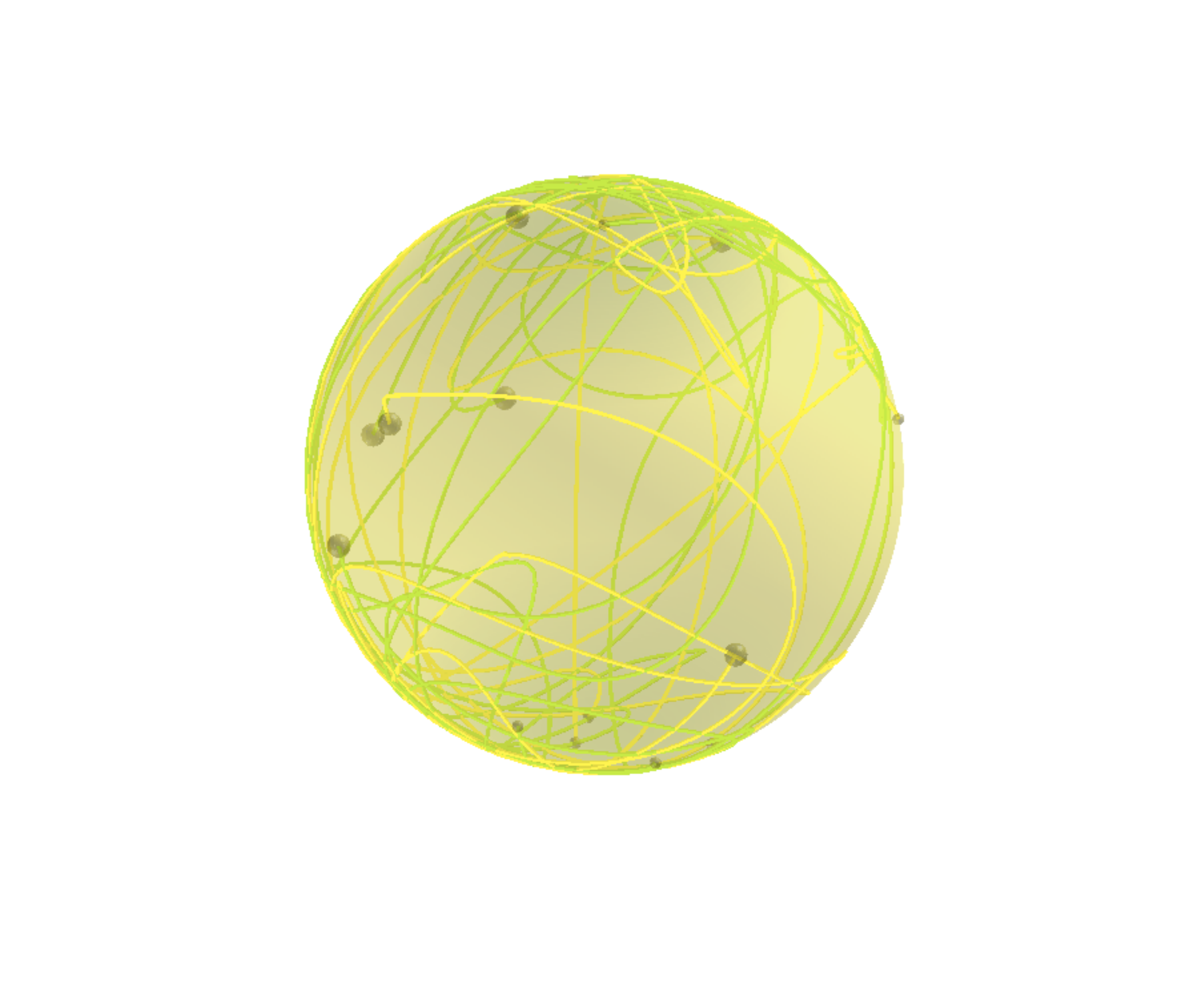}
  \caption{Asymmetric intergroup interactions:
    the upper group repulsed by the lower group and the lower group attracted to the upper group.
    The intragroup interaction remains attractive.
    Left: two groups of agents with asymmetric intergroup interactions converge to polarized equilibrium on a sphere with network \(A_1\).
    Right: the same groups of agents with asymmetric intergroup interactions
    do not converge to equilibrium on a sphere with network \(A_2\).
    Initial and final states are indicated by small and large dots on the yellow (\(\Vu\)) and green (\(\Vl\)) trajectories.}
  \label{fig:simHairy}
\end{figure}

\subsection{Weakly anomalous intergroup coupling}
\label{sec:simexpl}
One can imagine that for system \eqref{eq:main} living on a lemon,
introducing weakly attractive intergroup coupling to a small number of edges in \(\E_-\)
does not necessarily compromise polarization or its stability properties.
In turn, the tolerance for such deviation from purely repulsive intergroup coupling may be viewed as a measure of robustness.
Along this line and based on the example in §\ref{sec:simillu},
we change the weight on each edge in \(\E_-\) successively and see to what degree can polarization be maintained,
given different manifolds and initial conditions.
Each experiment is allowed sufficient time  to run for agents starting at different initial conditions to evolve.
A weight is accepted if the order parameters \(\norm{\rho_\text{u}}\) and \(\norm{\rho_\text{l}}\) are equal to 1
within the final integration time 200, where \(\rho_\text{u} = (x_1+\dotsb+x_M)/M\) and similarly for \(\rho_\text{l}\).
Complete phase synchronization is ruled out by visual inspection.

Table~\ref{tab:robsym} summarizes the findings.
\begin{table}
  \centering
  \begin{tabular}{ccc}
    \(a_{24}\) and \(a_{34}\) & 0.83 & 0.41 \\
    \(a_{27}\) and \(a_{35}\) & 0.099 & 0.23
  \end{tabular}
  \caption{Maximal weights for each edge in \(\E_-\) when changed to attractive coupling are the same for 7 different initial conditions,
    tested on the lemon (left) and the sphere (right) with the graph structure in Fig.~\ref{fig:A7}(left).}
  \label{tab:robsym}
\end{table}
Notice that \(a_{24} = a_{34}\) and \(a_{27} = a_{35}\) because, as it turns out, the network has a symmetric structure.
The reason that the initial conditions appear to have no effect on the attainable weights is perhaps less obvious.
However, assuming that no new equilibrium points are created within the dimples by the change of sign on the link,
the only option is to converge to the known equilibrium point, irrespective of the initial conditions.
As for why the edge \(\{2,4\}\) appears to be more robust to treason than \(\{2,7\}\) is harder to explain.
It is tempting to link the phenomenon to graph properties such as node degrees,
but it might also be altogether a spurious phenomenon.
\begin{figure}
  \centering
  \begin{tikzpicture}
  \node (v1) at (0,0) {1};
  \node (v2) at (-1,0) {2};
  \node (v3) at (1,0) {3};
  \node (v4) at (0,-1) {4};
  \node (v5) at (1,-1.5) {5};
  \node (v6) at (0,-2) {6};
  \node (v7) at (-1,-1.5) {7};

  \draw (v1) -- (v2);
  \draw (v1) -- (v3);
  \draw (v2) -- (v4);
  \draw (v2) -- (v7);
  \draw (v3) -- (v4);
  \draw (v3) -- (v5);
  \draw (v5) -- (v6);
  \draw (v6) -- (v7);
\end{tikzpicture} \hspace{2em}
  \begin{tikzpicture}
  \node (v1) at (0,0) {1};
  \node (v2) at (-1,0) {2};
  \node (v3) at (1,0) {3};
  \node (v4) at (0,-1) {4};
  \node (v5) at (1,-1.5) {5};
  \node (v6) at (0,-2) {6};
  \node (v7) at (-1,-1.5) {7};

  \draw (v1) -- (v2);
  \draw (v1) -- (v3);
  \draw (v2) -- (v4);
  \draw (v2) -- (v7);
  \draw (v3) -- (v4);
  \draw (v3) -- (v5);
  \draw (v4) -- (v7);
  \draw (v5) -- (v6);
\end{tikzpicture}
  \caption{Symmetric and asymmetric networks}
  \label{fig:A7}
\end{figure}
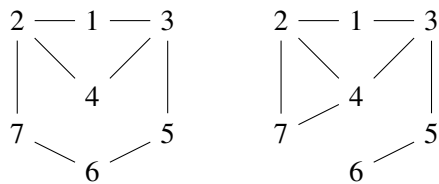

Table~\ref{tab:robasym} compiles the edge robustness for an asymmetric network, obtained by switching the link \(\{6,7\}\) to \(\{4,7\}\),
leaving the group membership unchanged.
The N/A values mean that the system fail to completely polarize.
There is not much to say quantitatively about the actual robustness values,
as the outcome heavily depends on the geometry of the underlying manifold.
\begin{table}
  \centering
  \begin{tabular}{ccc}
    \(a_{24}\) & 0.2 & 0.56 \\
    \(a_{27}\) & 0.94 & 0.82 \\
    \(a_{34}\) & 0.9 & 0.35 \\
    \(a_{35}\) & N/A & N/A
  \end{tabular}
  \caption{Maximal weights for each edge in \(\E_-\) when changed to attractive coupling are the same for 7 different initial conditions,
    tested on the lemon (left) and the sphere (right) with the graph structure in Fig.~\ref{fig:A7}(right).}
  \label{tab:robasym}
\end{table}

\section{Conclusions}
\label{sec:close}
We have established sufficient conditions for asymptotic stability of polarized equilibria
of arbitrarily connected multi-agent gradient flow systems over nonlinear spaces.
These sufficient conditions are tailored to four scenarios likely to be found on nonflat hypermanifolds,
arising from the combination of dimple/pimple pairs and attractive/repulsive intergroup couplings.
In particular, the hypersphere as a special case of general manifolds provides effortless generalization of previously known results.
Its highly symmetric nature further allows us to prove almost global asymptotic stability except for the unit circle.
A natural next step is to work with manifolds of dimensions much lower than the ambient space,
whereby the normal space dimension exceeds 1.
This would strengthen the proposed interpretation of low dimensional core belief space in the high dimensional external space.

\bibliographystyle{ieeetr}
\bibliography{bilibili}

\end{document}